\newtheorem{theorem}{Theorem}
\newtheorem{lemma}{Lemma}
\newtheorem{definition}{Definition}
\theoremstyle{remark}
\newcommand{\Rmnum}[1]{\expandafter\@slowromancap\romannumeral #1@}
\begin{document}

\title{Laplace Functional Ordering of Point Processes in Large-scale Wireless Networks}
\author{Junghoon Lee and Cihan Tepedelenlio\u{g}lu, \emph{Senior Member, IEEE}
\thanks{This work was supported in part by the National Science Foundation under Grant CCF 1117041. This work was also supported in part by Institute for Information and communications Technology Promotion(IITP) grant funded by the Korea government(MSIT) (No. 2017-0-01973 (Korea-Japan) International collaboration of 5G mmWave based Wireless Channel Characteristic and Performance Evaluation in High Mobility Environments).}
\thanks{Junghoon Lee is with the Electronics and Telecommunications Research Institute (ETRI), Daejeon, South Korea (Email:jh.lee@etri.re.kr).}
\thanks{Cihan Tepedelenlio\u{g}lu is with the School of Electrical, Computer, and Energy Engineering, Arizona State University, Tempe, AZ 85287, USA (Email:cihan@asu.edu).} }
\maketitle

\begin{abstract}
Stochastic orders on point processes are partial orders which capture notions like being larger or more variable. Laplace functional ordering of point processes is a useful stochastic order for comparing spatial deployments of wireless networks. It is shown that the ordering of point processes is preserved under independent operations such as marking, thinning, clustering, superposition, and random translation. Laplace functional ordering can be used to establish comparisons of several performance metrics such as coverage probability, achievable rate, and resource allocation even when closed form expressions of such metrics are unavailable. Applications in several network scenarios are also provided where tradeoffs between coverage and interference as well as fairness and peakyness are studied. Monte-Carlo simulations are used to supplement our analytical results.

\end{abstract}

\begin{IEEEkeywords}
Interference, point process, stochastic order.
\end{IEEEkeywords}

\section{Introduction}
\IEEEPARstart{P}{oint} processes have been used to describe spatial distribution of nodes in wireless networks. Examples include randomly distributed nodes in wireless sensor networks or ad-hoc networks \cite{Haenggi2008, Haenggi2009b, Kong2017} and the spatial distributions for base stations and mobile users in cellular networks \cite{Andrews2011, Mukherjee2012, Ye2013, Dhillon2017}. In the case of cognitive radio networks, locations of primary and secondary users have been modeled as point processes \cite{Rabbachin2011, Wen2012, Haenggi2012, Zhai2018}. Random translations of point processes hava been used for modeling of mobility of networks in \cite{Gong2013}. Stationary Poisson processes provide a tractable framework, but suffer from notorious modeling issues in matching real network distributions. Stochastic ordering of point processes provide an ideal framework for comparing two deployment/usage scenarios even in cases where the performance metrics cannot be computed in closed form. These partial orders capture intuitive notions like one point process being more dense, or more variable. Existing works on point process modeling for wireless networks have paid little attention to how two intractable scenarios can be nevertheless compared to aid in system optimization.

Recently stochastic ordering theory has been used for performance comparison in wireless networks which are modeled as point processes \cite{Blaszczyszyn2009, Blaszczyszyn2014, Madhusudhanan2012, Dhillon2013, Lee2014}. Directionally convex (DCX) ordering of point processes and its integral shot noise fields have been studied in \cite{Blaszczyszyn2009}. The work has been extended to the clustering comparison of point processes with various weaker tools including void probabilities and moment measures, than DCX ordering in \cite{Blaszczyszyn2014}. In \cite{Madhusudhanan2012}, usual stochastic ordering of random variables capturing carrier-to-interference ratio has been established in cellular systems. Ordering results for coverage probability and per user rate have been shown in multi-antenna heterogeneous cellular networks \cite{Dhillon2013}. In \cite{Lee2014}, Laplace functional (LF) ordering of point processes has been introduced and used to study interference distributions in wireless networks. Several examples of the LF ordering of specific point processes have been also introduced in \cite{Lee2014}, including stationary Poisson, mixed Poisson, Poisson cluster, and Binomial point processes.

In this paper, we apply the LF ordering concept to several general classes of point processes such as Cox, homogeneous independent cluster, perturbed lattice, and mixed binomial point processes which have been used to describe distributed nodes of wireless systems in the literature. We also investigate the preservation properties of the LF ordering of point processes with respect to independent operations such as marking, thinning, random translation, and superposition. We prove that the LF ordering of original point processes still holds after applying these operations on the point processes. To the best of our knowledge, there is no study of LF ordering of general classes of point processes and their preservation properties in the literature. Using these properties, we compare performances without having to obtain closed-form results for a wide range of performance metrics such as coverage probability, achievable rate, and resource allocation of different systems. In addition to the performance comparison, the stochastic ordering of point processes provides guidelines for system design such as network deployment and user selection schemes.

The paper is organized as follows: In Section \ref{sec:Math_Prelim}, we introduce mathematical preliminaries. Section \ref{sec:Ordering_of_Point_Processes} introduces ordering of point processes. In Section \ref{sec:Preserv_Ordering_of_Point_Processes}, we show the preservation properties of LF ordering. Section \ref{sec:App_to_Cellular_Networks} and \ref{sec:App_to_Cognitive_Networks} introduce applications of stochastic ordering of point processes in wireless networks. Section \ref{sec:Numerical_Results} presents simulations to corroborate our claims. Finally, the paper is summarized in Section \ref{sec:Summary}.

\section{Mathematical Preliminaries}
\label{sec:Math_Prelim}

\subsection{Stochastic Ordering of Random Variables}
Before introducing ordering of point processes, we briefly review some common stochastic orders between random variables, which can be found in \cite{Shaked, Shaked2}.

\subsubsection{Usual Stochastic Ordering}
\label{subsubsec:Usual_Stochastic_Ordering}
Let $X$ and $Y$ be two random variables (RVs) such that
\begin{equation}
\label{eqn:def_st_ordering}
P\left( X>x \right) \leq P\left( Y>x \right), -\infty < x < \infty.
\end{equation}
Then $X$ is said to be smaller than $Y$ in the \emph{usual stochastic order} (denoted by $X \leq_{\mathrm{st}} Y$). Roughly speaking, \eqref{eqn:def_st_ordering} says that $X$ is less likely than $Y$ to take on large values. To see the interpretation of this in the context of wireless communications, when $X$ and $Y$ are distributions of instantaneous SNRs due to fading, \eqref{eqn:def_st_ordering} is a comparison of outage probabilities. Since $X, Y$ are positive in this case, $x\in\mathbb{R}^+$ is sufficient in \eqref{eqn:def_st_ordering}.

\subsubsection{Laplace Transform Ordering}
\label{subsubsec:Laplace_Transform_Ordering}
Let $X$ and $Y$ be two non-negative RVs such that
\begin{equation}
\label{eqn:def_Lt_ordering}
\mathcal{L}_X(s)=\mathbb{E}[\exp{(-sX)}] \geq \mathbb{E}[\exp{(-sY)}]=\mathcal{L}_Y(s) \text{ for } s>0.
\end{equation}
Then $X$ is said to be smaller than $Y$ in the \emph{Laplace transform} (LT) \emph{order} (denoted by $X \leq_{\mathrm{Lt}} Y$). For example, when $X$ and $Y$ are the instantaneous SNR distributions of a fading channel, \eqref{eqn:def_Lt_ordering} can be interpreted as a comparison of average bit error rates for exponentially decaying instantaneous error rates (as in the case for differential-PSK (DPSK) modulation and Chernoff bounds for other modulations) \cite{Tepedelenlioglu2011}. The LT order $X \leq_{\mathrm{Lt}} Y$ is equivalent to
\begin{equation}
\label{eqn:def_Lt_ordering_conseq1}
\mathbb{E}[l(X)] \geq \mathbb{E}[l(Y)],
\end{equation}
for all \emph{completely monotonic} (c.m.) functions $l(\cdot)$ \cite[pp. 96]{Shaked2}. By definition, the derivatives of a c.m. function $l(x)$ alternate in sign: $(-1)^n \mathrm{d}^n l(x)/\mathrm{d}x^n \geq 0$, for $n=0,1,2,\dots$, and $x \geq 0$. An equivalent definition is that c.m. functions are positive mixtures of decaying exponentials \cite{Shaked2}. A similar result to \eqref{eqn:def_Lt_ordering_conseq1} with a reversal in the inequality states that
\begin{equation}
\label{eqn:def_Lt_ordering_conseq2}
X \leq_{\mathrm{Lt}} Y \Longleftrightarrow \mathbb{E}[l(X)] \leq \mathbb{E}[l(Y)],
\end{equation}
for all $l(\cdot)$ that have a completely monotonic derivative (c.m.d.). Finally, note that $X \leq_{\mathrm{st}} Y \Rightarrow$ \\ $X \leq_{\mathrm{Lt}} Y$. This can be shown by invoking the fact that $X \leq_{\mathrm{st}} Y$ is equivalent to $\mathbb{E}[l(X)] \leq \mathbb{E}[l(Y)]$ whenever $l(\cdot)$ is an increasing function \cite{Shaked2}, and that c.m.d. functions in \eqref{eqn:def_Lt_ordering_conseq2} are increasing.

\subsection{Point Processes and Random Measures}
\label{Point_Processes}
Point processes have been used to model large-scale networks \cite{Karr1991, Stoyan1995, Haenggi2008, Win2009, Haenggi2009b, Gulati2010, Haenggi2012, Haas2003, Garetto2011}. Since wireless nodes are usually not co-located, our focus is on \emph{simple} point processes, where only one point can exist at a given location. In addition, we assume the point processes are locally finite, i.e., there are finitely many points in any bounded set. Unlike \cite{Lee2014}, stationary and isotropic properties are not necessary in this paper. In what follows, we introduce some fundamental notions that will be useful.

\subsubsection{Campbell's Theorem}
\label{sec:Campbell_Theorem}
It is often necessary to evaluate the expected sum of a function evaluated at the point process $\Phi$. Campbell's theorem helps in evaluating such expectations. For any non-negative measurable function $u$ which runs over the set $\mathscr{U}$ of all non-negative functions on $\mathbb{R}^d$,
\begin{equation}
\label{eqn:Campbell_Theorem_PP}
\mathbb{E}\left[\sum_{x\in\Phi}u(x)\right]=\int_{\mathbb{R}^d}u(x)\Lambda(\mathrm{d}x) \text{.}
\end{equation}
The intensity measure $\Lambda$ of $\Phi$ in \eqref{eqn:Campbell_Theorem_PP} is a characteristic analogous to the mean of a real-valued random variable and defined as $\Lambda(B)=\mathbb{E}\left[\Phi(B)\right]$ for bounded subsets $B \subset \mathbb{R}^d$. So $\Lambda(B)$ is the mean number of points in $B$. If $\Phi$ is stationary then the intensity measure simplifies as $\Lambda(B)=\lambda \vert B \vert$ for some non-negative constant $\lambda$, which is called the intensity of $\Phi$, where $\vert B \vert$ denotes the $d$ dimensional volume of $B$. For stationary point processes, the right side in \eqref{eqn:Campbell_Theorem_PP} is equal to $\lambda\int_{\mathbb{R}^d}u(x)\mathrm{d}x$. Therefore, any two stationary point processes with same intensity lead to equal average sum of a function (when the mean value exists).

A random measure $\Psi$ is a function from Borel sets in $\mathbb{R}^d$ to random variables in $\mathbb{R}^{+}$. The \emph{Laplace functional} $L$ of random measure $\Psi$ is defined by the following formula
\begin{equation}
\label{eqn:def_Lf_of_PP}
L_{\Psi}(u) := \mathbb{E}\left[e^{-\int_{\mathbb{R}^d}u(x)\Psi(\mathrm{d}x)}\right] \text{.}
\end{equation}
The Laplace functional completely characterizes the distribution of the random measure \cite{Stoyan1995}. A point process $\Phi$ is a special case of a random measure $\Psi$ where the measure takes on values in the nonnegative integer random variables. In the case of the Laplace functional of a point process, $\int_{\mathbb{R}^d}u(x)\Phi(\mathrm{d}x)$ can be written as $\sum_{x\in\Phi}u(x)$ in \eqref{eqn:def_Lf_of_PP}. As an important example, the Laplace functional $L$ of Poisson point process of intensity measure $\Lambda$ is
\begin{equation}
\label{eqn:def_Lf_of_PPP}
L_{\Phi}(u) = \exp\left\{-\int_{\mathbb{R}^d}\left[1-\exp(-u(x))\right]\Lambda(\mathrm{d}x)\right\} \text{.}
\end{equation}
If the Poisson point process is stationary, the Laplace functional simplifies with $\Lambda(\mathrm{d}x)=\lambda\mathrm{d}x$.

\subsubsection{Laplace Functional Ordering}
In this section, we introduce the Laplace functional stochastic order between random measures which can also be used to order point processes.
\begin{definition}
\label{def_Lf_ordering}
Let $\Psi_1$ and $\Psi_2$ be two random measures such that
\begin{equation}
\label{eqn:def_Lf_ordering}
L_{\Psi_1}(u)=\mathbb{E}\left[e^{-\int_{\mathbb{R}^d}u(x)\Psi_1(\mathrm{d}x)}\right] \geq \mathbb{E}\left[e^{-\int_{\mathbb{R}^d}u(x)\Psi_2(\mathrm{d}x)}\right]=L_{\Psi_2}(u)
\end{equation}
where $u(\cdot)$ runs over the set $\mathscr{U}$ of all non-negative functions on $\mathbb{R}^d$. Then $\Psi_1$ is said to be smaller than $\Psi_2$ in the Laplace functional (LF) order (denoted by $\Psi_1 \leq_{\mathrm{Lf}} \Psi_2$).
\end{definition}
In this paper, we focus on the LF order of point processes unless otherwise specified. Note that the LT ordering in \eqref{eqn:def_Lt_ordering} is for RVs, whereas the LF ordering in \eqref{eqn:def_Lf_ordering} is for point processes or random measures. They can be connected in the following way:
\begin{equation}
\label{eqn:Lf_to Lt_ordering}
\Phi_1 \leq_{\mathrm{Lf}} \Phi_2 \Longleftrightarrow \sum_{x\in\Phi_1}u(x) \leq_{\mathrm{Lt}} \sum_{x\in\Phi_2}u(x) \text{,  } \forall u \in \mathscr{U} \text{.}
\end{equation}
Hence, it is possible to think of LF ordering of point processes as the LT ordering of their aggregate processes. Intuitively, the LF ordering of point processes can be interpreted as the LT ordering of their aggregate interferences. The LF ordering of point processes also can be translated into the ordering of coverage probabilities and spatial coverages which will be discussed in detail later.

\subsubsection{Voronoi Cell and Tessellation}
The Voronoi cell $V(x)$ of a point $x$ of a general point process $\Phi \subset \mathbb{R}^d$ consists of those locations of $\mathbb{R}^d$ whose distance to $x$ is not greater than their distance to any other point in $\Phi$, i.e.,
\begin{equation}
\label{eqn:def_Voronoi_cell}
V(x) := \lbrace y\in\mathbb{R}^d: \lVert x-y \rVert \leq \lVert z-y \rVert \text{ } \forall z\in\Phi \setminus {x} \rbrace \text{.}
\end{equation}
The Voronoi tessellation (or Voronoi diagram) is a decomposition of the space into the Voronoi cells of a general point process.

\section{Ordering of General Classes of Point Processes}
\label{sec:Ordering_of_Point_Processes}
The examples for LF orderings of some specific point processes have been provided in \cite{Lee2014}. In this section, we introduce the LF ordering of general classes of point processes.

\subsection{Cox Processes}
A generalization of the Poisson process is to allow for the intensity measure itself being random. The resulting process is then Poisson conditional on the intensity measure. Such processes are called \emph{doubly stochastic Poisson processes} or \emph{Cox processes}. Consider a random measure $\Psi$ on $\mathbb{R}^d$. Assume that for each realization $\Psi=\Lambda$, an independent Poisson point process $\Phi$ of intensity measure $\Lambda$ is given. The random measure $\Psi$ is called the driving measure for a Cox process. The LF ordering of Cox processes depends on their driving random measures.
\begin{theorem}
\label{Lf_ordering_of_Cox_PPs}
Let $\Phi_1$ and $\Phi_2$ be two Cox processes with driving random measures $\Psi_1$ and $\Psi_2$ respectively. If $\Psi_1 \leq_{\mathrm{Lf}} \Psi_2$, then $\Phi_1\leq_{\mathrm{Lf}} \Phi_2$.
\end{theorem}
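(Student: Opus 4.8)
The plan is to reduce the Laplace functional of each Cox process to the Laplace functional of its driving measure, evaluated at a suitably transformed test function, and then apply the hypothesis $\Psi_1 \leq_{\mathrm{Lf}} \Psi_2$ to that transformed function. The entire argument rests on computing $L_{\Phi}(u)$ for a generic Cox process $\Phi$ with driving measure $\Psi$ by conditioning on a realization $\Psi = \Lambda$.

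First I would condition on $\Psi = \Lambda$. Given this realization, $\Phi$ is by definition a Poisson point process with intensity measure $\Lambda$, so the conditional expectation of $e^{-\sum_{x\in\Phi}u(x)}$ is given exactly by the Poisson Laplace functional in \eqref{eqn:def_Lf_of_PPP}, namely $\exp\{-\int_{\mathbb{R}^d}[1-e^{-u(x)}]\Lambda(\mathrm{d}x)\}$. Taking the outer expectation over $\Psi$ via the tower property then yields
\begin{equation}
L_{\Phi}(u) = \mathbb{E}\left[\exp\left\{-\int_{\mathbb{R}^d}\left(1 - e^{-u(x)}\right)\Psi(\mathrm{d}x)\right\}\right].
\end{equation}
The key observation is that if I define $v(x) := 1 - e^{-u(x)}$, the right-hand side is precisely $L_{\Psi}(v)$, the Laplace functional of the driving measure evaluated at $v$. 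Since $u \in \mathscr{U}$ is non-negative, $e^{-u(x)} \in (0,1]$ and hence $v(x) \in [0,1)$; in particular $v$ is itself non-negative, so $v \in \mathscr{U}$. This is the crucial structural point: the transformation $u \mapsto v$ maps $\mathscr{U}$ into $\mathscr{U}$, so the ordering hypothesis is applicable to $v$.

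Applying this identity to both processes gives $L_{\Phi_1}(u) = L_{\Psi_1}(v)$ and $L_{\Phi_2}(u) = L_{\Psi_2}(v)$ for the same $v$. Because $\Psi_1 \leq_{\mathrm{Lf}} \Psi_2$ holds for \emph{all} non-negative functions and $v \in \mathscr{U}$, we have $L_{\Psi_1}(v) \geq L_{\Psi_2}(v)$, and chaining the three relations yields $L_{\Phi_1}(u) \geq L_{\Phi_2}(u)$ for every $u \in \mathscr{U}$, which is exactly $\Phi_1 \leq_{\mathrm{Lf}} \Phi_2$. I do not anticipate a serious obstacle: the only points requiring care are verifying that the transformed test function $v$ remains non-negative (so the hypothesis can be invoked), and justifying the conditioning step, which needs only that the inner Poisson Laplace functional is measurable in $\Lambda$ — automatic here. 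Note also that $v$ need not exhaust $\mathscr{U}$; since the hypothesis already holds for every non-negative function, it holds in particular for the specific $v$ attached to each $u$, which is all the argument requires.
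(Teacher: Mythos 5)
Your proposal is correct and follows essentially the same route as the paper's own proof: condition on the realization of the driving measure, invoke the Poisson Laplace functional to obtain $L_{\Phi}(u) = L_{\Psi}(1-e^{-u})$, and apply the hypothesis to the transformed test function. Your explicit check that $v = 1-e^{-u}$ remains in $\mathscr{U}$ is a point the paper leaves implicit, but the argument is identical in substance.
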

\begin{proof}
The proof is given in Appendix \ref{app:Proof_for_Lf_Cox_PPs}.
\end{proof}
The mixed Poisson process is a simple instance of a Cox process, where the random measure $\Psi$ is described by a positive random constant $X$ so that $\Psi(B)=X|B|$. Since the Laplace functional of the mixed Poisson process can be expressed as $L_{\Phi}(u)=\mathbb{E}_{X}\left[\exp\{-X\int_{\mathbb{R}^d}[1-\exp(-u(x))]\mathrm{d}x\}\right]$, using \eqref{eqn:def_Lf_of_PPP}, and because $\int_{\mathbb{R}^d}[1-\exp(-u(x))]\mathrm{d}x \geq 0$ and the c.m. property of $\exp(-ax), a \geq 0$, the LF ordering of mixed Poisson processes has the following relationship: if $X_1 \leq_{\mathrm{Lt}} X_2$, then $\Phi_1\leq_{\mathrm{Lf}} \Phi_2$.

\subsection{Homogeneous Independent Cluster Processes}
\label{subsec:Ordering_of_Cluseter_PPs}
A general cluster process is generated by taking a parent point process and daughter point processes, one per parent, and translating the daughter processes to the position of their parent. The cluster process is then the union of all the daughter points. Denote the parent point process by $\Phi_{\mathrm{p}}=\lbrace x_1, x_2,\dots\rbrace$, and let $n\in\mathbb{N}\cup\{\infty\}$ be the number of parent points. Further let $\{\Phi_i\},i\in\mathbb{N}$, be a family of finite points sets, the untranslated clusters or daughter processes. The cluster process is then the union of the translated clusters:
\begin{equation}
\label{eqn:General_cluster_PP}
\Phi := \bigcup_{i=1}^{n}\Phi_i+x_i \text{.}
\end{equation}
If the parent process is a lattice, the process is called a \emph{lattice cluster process}. Analogously, if the parent process is a Poisson point process, the resulting process is a \emph{Poisson cluster process}.

If the parent process is stationary and the daughter processes $\{\Phi_i\}_i$ are finite point sets which are independent of each other and are independent of $\Phi_{\mathrm{p}}$, and have the same distribution, the procedure is called \emph{homogeneous independent clustering}. In this case, only the statistics of one cluster need to be specified, which is usually done by referring to the \emph{representative cluster}, denoted by $\Phi_0$ which is distributed the same as any $\Phi_i,i\in\mathbb{N}$. In this class of point processes, the LF ordering depends on the parent process $\Phi_{\mathrm{p}}$ and the representative process $\Phi_0$ as follows:
\begin{theorem}
\label{Lf_ordering_of_homo_indep_cluster_PPs}
Let $\Phi_1$ and $\Phi_2$ be two homogeneous independent cluster processes having representative clusters $\Phi_{0_1}$ and $\Phi_{0_2}$ respectively. Also, let $\Phi_{\mathrm{p}_1}$ and $\Phi_{\mathrm{p}_2}$ be the parent point processes of two homogeneous independent cluster processes $\Phi_1$ and $\Phi_2$ respectively. If $\Phi_{\mathrm{p}_1} \leq_{\mathrm{Lf}} \Phi_{\mathrm{p}_2}$ and $\Phi_{0_1} \leq_{\mathrm{Lf}} \Phi_{0_2}$, then $\Phi_1\leq_{\mathrm{Lf}} \Phi_2$.
\end{theorem}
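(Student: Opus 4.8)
The plan is to reduce the Laplace functional of a homogeneous independent cluster process to that of its parent process, evaluated at a suitably derived non-negative test function, and then to order the two cluster processes by a two-step comparison that cleanly separates the contribution of the parents from that of the representative clusters.

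First I would compute $L_{\Phi}(u)$ for a single cluster process defined as in \eqref{eqn:General_cluster_PP}. Writing $\sum_{x\in\Phi}u(x)=\sum_{x_i\in\Phi_{\mathrm{p}}}\sum_{y\in\Phi_i}u(y+x_i)$ and conditioning on the parent process $\Phi_{\mathrm{p}}$, the assumed independence of the daughter processes $\{\Phi_i\}$ from one another and from $\Phi_{\mathrm{p}}$ lets me factor the conditional expectation into a product over the parent points. Since each $\Phi_i$ is distributed as the representative cluster $\Phi_0$, each factor equals $L_{\Phi_0}\big(u(\cdot+x_i)\big)$, the Laplace functional of $\Phi_0$ at the shifted function $u(\cdot+x_i)\in\mathscr{U}$. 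Setting
\[
v(x):=-\log L_{\Phi_0}\big(u(\cdot+x)\big),
\]
which is non-negative because a Laplace functional takes values in $[0,1]$ on $\mathscr{U}$, I obtain the clean identity
\[
L_{\Phi}(u)=\mathbb{E}_{\Phi_{\mathrm{p}}}\Big[\textstyle\prod_{x_i\in\Phi_{\mathrm{p}}}e^{-v(x_i)}\Big]=L_{\Phi_{\mathrm{p}}}(v).
\]
Thus the cluster process's Laplace functional is the parent's Laplace functional applied to $v\in\mathscr{U}$.

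Applying this to both processes gives $L_{\Phi_j}(u)=L_{\Phi_{\mathrm{p}_j}}(v_j)$ with $v_j(x)=-\log L_{\Phi_{0_j}}\big(u(\cdot+x)\big)$ for $j=1,2$, and I then chain two inequalities. From $\Phi_{0_1}\leq_{\mathrm{Lf}}\Phi_{0_2}$ I get $L_{\Phi_{0_1}}(w)\geq L_{\Phi_{0_2}}(w)$ for every $w\in\mathscr{U}$; taking $w=u(\cdot+x)$ and applying $-\log(\cdot)$ yields the pointwise bound $v_1(x)\leq v_2(x)$. Since $L_{\Phi_{\mathrm{p}_1}}(\cdot)$ is monotone decreasing in its argument --- enlarging the test function only decreases $e^{-\sum_{x\in\Phi_{\mathrm{p}_1}}w(x)}$ for each realization of $\Phi_{\mathrm{p}_1}$ --- this gives $L_{\Phi_{\mathrm{p}_1}}(v_1)\geq L_{\Phi_{\mathrm{p}_1}}(v_2)$. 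Finally, $\Phi_{\mathrm{p}_1}\leq_{\mathrm{Lf}}\Phi_{\mathrm{p}_2}$ applied to the test function $v_2\in\mathscr{U}$ gives $L_{\Phi_{\mathrm{p}_1}}(v_2)\geq L_{\Phi_{\mathrm{p}_2}}(v_2)$. Combining,
\[
L_{\Phi_1}(u)=L_{\Phi_{\mathrm{p}_1}}(v_1)\geq L_{\Phi_{\mathrm{p}_1}}(v_2)\geq L_{\Phi_{\mathrm{p}_2}}(v_2)=L_{\Phi_2}(u),
\]
which is exactly $\Phi_1\leq_{\mathrm{Lf}}\Phi_2$.

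The main obstacle is the first step: justifying the factorization of the Laplace functional and confirming that the derived function $v$ genuinely lies in $\mathscr{U}$, so that both hypotheses and the monotonicity property can all be legitimately invoked. Once the representation $L_{\Phi}(u)=L_{\Phi_{\mathrm{p}}}(v)$ is secured, the ordering follows transparently from the separation of the two hypotheses --- one controlling $v_1\leq v_2$ through the cluster ordering and the monotonicity of the parent's functional, the other controlling the parent functionals at a common argument through the parent ordering. A minor technical point worth checking is the measurability of $v$ and the convergence of the product when the parent has infinitely many points, but these are handled by the local finiteness assumption together with the standard Laplace functional machinery.
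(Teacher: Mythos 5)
Your proof is correct and follows essentially the same route as the paper's: your identity $L_{\Phi}(u)=L_{\Phi_{\mathrm{p}}}(v)$ with $v(x)=-\log L_{\Phi_0}\big(u(\cdot+x)\big)$ is exactly the paper's representation $L_{\Phi}(u)=\mathbb{E}_{\Phi_{\mathrm{p}}}\big[\prod_{x\in\Phi_{\mathrm{p}}}L_0^{(x)}(u)\big]$, and your chain $L_{\Phi_{\mathrm{p}_1}}(v_1)\geq L_{\Phi_{\mathrm{p}_1}}(v_2)\geq L_{\Phi_{\mathrm{p}_2}}(v_2)$ is the same two-step interpolation (order in the cluster with the parent fixed, then order in the parent) that the paper invokes. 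You simply make explicit the details the paper leaves implicit, namely the non-negativity of $v$, the pointwise bound $v_1\leq v_2$, and the monotonicity of the parent's Laplace functional.
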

\begin{proof}
The proof is given in Appendix \ref{app:Proof_for_homo_indep_cluster_PPs}.
\end{proof}

\subsection{Perturbed Lattice Processes with Replicating Points}
\label{subsec:Pert_lat_PPs}
Lattices are deterministic point processes defined as
\begin{equation}
\label{eqn:def_gen_lat_PP}
\mathbb{L} := \lbrace u\in\mathbb{Z}^d:\mathbf{G}u \rbrace \text{,}
\end{equation}
where $\mathbf{G}\in\mathbb{R}^{d \times d}$ is a matrix with $\det\mathbf{G}\neq0$, the so-called generator matrix. The volume of each Voronoi cell is $V=|\mathrm{det}\mathbf{G}|$ and the intensity of the lattice is $\lambda=1/V$ \cite{Haenggi2010}. The perturbed lattice process is a lattice cluster process. Denote the lattice point process by $\mathbb{L}=\lbrace x_1, x_2,\dots\rbrace$, and let $n\in\mathbb{N}\cup\{\infty\}$ be the number of lattice points. Further let $\{\Phi_i\},i\in\mathbb{N}$, be untranslated clusters. In each cluster, the number of daughter points are a random variable $X$, independent of each other, and identically distributed. Moreover, these points are distributed by some given spatial distribution. The entire process is then the union of the translated clusters as in \eqref{eqn:General_cluster_PP}. If the replicating points are uniformly distributed in the Voronoi cell of the original lattice, the resulting point process is a stationary point process and called a \emph{uniformly perturbed lattice process}. If, moreover, the number of replicas $X$ are Poisson random variables, the the resulting process is a stationary Poisson point process \cite{Blaszczyszyn2010}. Now, we can define the following LF ordering of such point processes.
\begin{theorem}
\label{Lf_ordering_of_pert_lat_PPs}
Let $\Phi_1$ and $\Phi_2$ be two uniformly perturbed lattice processes with numbers of replicas being non-negative integer valued random variables $X_1$ and $X_2$ respectively, and with the same mean $\mathbb{E}[X_1]=\mathbb{E}[X_2]=1$. If $X_1 \leq_{\mathrm{Lt}} X_2$, then $\Phi_1\leq_{\mathrm{Lf}} \Phi_2$.
\end{theorem}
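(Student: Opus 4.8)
The plan is to compute the Laplace functional of a uniformly perturbed lattice process in closed form and then reduce the comparison to a pointwise inequality between probability generating functions (PGFs). Write the common lattice as $\mathbb{L}=\{x_1,x_2,\dots\}$ with Voronoi cell $V$ of volume $|V|$, so that the cluster attached to $x_i$ consists of an independent copy of $X_k$ points (for process $k$), each independently uniform on the shifted cell $x_i+V$. Fix $u\in\mathscr{U}$. Conditioning on the number of daughter points in each cluster and using the independence across clusters and of the replica positions, I would first establish
\begin{equation}
L_{\Phi_k}(u)=\prod_{i}\mathbb{E}\!\left[g_i^{\,X_k}\right]=\prod_i G_{X_k}(g_i),\qquad g_i:=\frac{1}{|V|}\int_{V}e^{-u(x_i+y)}\,\mathrm{d}y,
\end{equation}
where $G_{X_k}(z)=\mathbb{E}[z^{X_k}]$ is the PGF of $X_k$. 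Since $u\ge 0$ we have $e^{-u}\in[0,1]$, hence each argument satisfies $g_i\in[0,1]$.

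The key algebraic observation is that for a non-negative integer valued random variable the PGF and the Laplace transform coincide under the substitution $z=e^{-s}$: indeed $G_X(e^{-s})=\mathbb{E}[(e^{-s})^X]=\mathbb{E}[e^{-sX}]=\mathcal{L}_X(s)$. Therefore the hypothesis $X_1\leq_{\mathrm{Lt}}X_2$, which by \eqref{eqn:def_Lt_ordering} means $\mathcal{L}_{X_1}(s)\ge\mathcal{L}_{X_2}(s)$ for all $s>0$, is exactly the statement $G_{X_1}(z)\ge G_{X_2}(z)$ for all $z\in(0,1)$. The endpoints extend this range: at $z=1$ both PGFs equal $1$, and at $z=0$ the limit $s\to\infty$ gives $G_X(0)=P(X=0)$ with $P(X_1=0)\ge P(X_2=0)$, so $G_{X_1}(z)\ge G_{X_2}(z)$ in fact holds for every $z\in[0,1]$.

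It then remains to assemble the two ingredients. Because both processes are built on the same lattice $\mathbb{L}$ with the same uniform law for the replica positions, the arguments $g_i$ appearing in the product are identical for $\Phi_1$ and $\Phi_2$; only the PGF differs. The equal-mean hypothesis $\mathbb{E}[X_1]=\mathbb{E}[X_2]=1$ guarantees that both perturbed lattices have the same intensity $\lambda=1/|V|$, so the comparison is between processes of equal density (a comparison of variability rather than magnitude). Using $g_i\in[0,1]$ together with the pointwise PGF inequality gives $0\le G_{X_2}(g_i)\le G_{X_1}(g_i)$ for every $i$; multiplying these non-negative factors (and passing to the limit of partial products) yields $L_{\Phi_1}(u)\ge L_{\Phi_2}(u)$. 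As $u\in\mathscr{U}$ was arbitrary, Definition~\ref{def_Lf_ordering} gives $\Phi_1\leq_{\mathrm{Lf}}\Phi_2$.

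The step I expect to be the main obstacle is the rigorous derivation of the product representation and the control of its convergence: one must carefully condition on the random cluster sizes, invoke independence across clusters, and justify that the infinite product of the factors $G_{X_k}(g_i)$ converges and that the factor-by-factor inequality survives the limit (this is automatic when $u$ has bounded support, since then $g_i=1$ for all but finitely many $i$, and general $u\in\mathscr{U}$ follows since $L_{\Phi_k}(u)\in[0,1]$ is well defined). The boundary behaviour of the PGF at $g_i\in\{0,1\}$ must also be handled, since the Lt order is stated only for $s>0$; this is resolved by the continuity argument above. Alternatively, one may recognise the perturbed lattice as a cluster process with a common (deterministic lattice) parent and invoke Theorem~\ref{Lf_ordering_of_homo_indep_cluster_PPs}, after checking that the representative clusters satisfy $\Phi_{0_1}\leq_{\mathrm{Lf}}\Phi_{0_2}$ — which is precisely the single-factor PGF inequality $G_{X_1}(g)\ge G_{X_2}(g)$ established above.
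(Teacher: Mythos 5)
Your proposal is correct and follows essentially the same route as the paper's proof: a per-Voronoi-cell decomposition giving the factor $\mathbb{E}\bigl[t^{X}\bigr]$ with $t\in[0,1]$, the observation that LT ordering of $X_1,X_2$ forces $\mathbb{E}\bigl[t^{X_1}\bigr]\ge\mathbb{E}\bigl[t^{X_2}\bigr]$ (the paper phrases this via complete monotonicity of $t^z$, you via the PGF substitution $z=e^{-s}$, which is the same fact), and finally combining cells via the product form of the Laplace functional, which is exactly the content of the paper's superposition lemma (Lemma \ref{Lf_ordering_superposition_PP}) that it cites. Your extra care at the endpoints $z\in\{0,1\}$ and the remark on convergence of the infinite product are refinements the paper glosses over, not a different method.
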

\begin{proof}
The proof is given in Appendix \ref{app:Proof_for_Lf_pert_lat_PPs}.
\end{proof}
Based on Theorem 5.A.21 in \cite{Shaked}, the \emph{smallest} and \emph{biggest} LT ordered random variables can be defined as follows: Let $Y$ be a random variable such that $P\lbrace Y=0 \rbrace=1-P\lbrace Y=2 \rbrace=1/2$ and let $Z$ be a random variable degenerate at $1$. Let $X$ be a non-negative random variable with mean $1$. Then
\begin{equation}
\label{eqn:Lt_ordering_of_RVs}
Y \leq_{\mathrm{Lt}} X \leq_{\mathrm{Lt}} Z \text{.}
\end{equation}
From Theorem \ref{Lf_ordering_of_pert_lat_PPs} and \eqref{eqn:Lt_ordering_of_RVs}, the uniformly perturbed lattice processes with replicating points with non-negative integer valued distribution $Y$ and $Z$ will be the \emph{smallest} and \emph{biggest} LF ordered point processes respectively among uniformly perturbed lattice processes with the same average number of points. The smallest LF ordered uniformly perturbed lattice process exhibits clustering since some Voronoi cells have $2$ points but other cells do not have any point. This observation is in line with the intuition that clustering diminishes point processes in the LF order.

\subsection{Mixed Binomial Point Processes}
\label{subsec:Mixed_BPPs}
In binomial point processes, there are a total of fixed $N$ points uniformly distributed in a bounded set $B \in \mathbb{R}^d$. The density of the process is given by $\lambda=N/|B|$ where $|B|$ is the volume of $B$. If the number of points $N$ is random, the point process is called as a mixed binomial point process. As an example, with Poisson distributed $N$, the point process is called as a finite Poisson point process. The intensity measure of mixed binomial point processes is $\Lambda(B)=\lambda|B|$. In these point processes, one can show the following:
\begin{theorem}
\label{Lf_ordering_of_mixed_BPPs}
Let $\Phi_1$ and $\Phi_2$ be two mixed binomial point process with non-negative integer valued random distribution $N_1$ and $N_2$ respectively. If $N_1 \leq_{\mathrm{Lt}} N_2$, then $\Phi_1\leq_{\mathrm{Lf}} \Phi_2$.
\end{theorem}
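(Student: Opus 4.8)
The plan is to compute the Laplace functional of a mixed binomial point process explicitly as a function of the generating distribution of the point count $N$, and then to recognize the relevant functional of $N$ as the expectation of a completely monotonic function so that the hypothesis $N_1 \leq_{\mathrm{Lt}} N_2$ transfers directly to the LF order via \eqref{eqn:def_Lt_ordering_conseq1}. First I would condition on $N=k$. Given $k$ points, the process is an ordinary binomial point process: the $k$ locations are i.i.d.\ uniform on the bounded set $B$ with density $1/|B|$. Using the representation $\int_{\mathbb{R}^d}u(x)\Phi(\mathrm{d}x)=\sum_{x\in\Phi}u(x)$ from \eqref{eqn:def_Lf_of_PP} and independence of the $k$ points, the conditional Laplace functional factors as
\begin{equation}
\label{eqn:mixedBPP_conditional}
\mathbb{E}\left[e^{-\sum_{x\in\Phi}u(x)}\,\middle|\,N=k\right]=\left(\frac{1}{|B|}\int_{B}e^{-u(x)}\,\mathrm{d}x\right)^{k}.
\end{equation}

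Next I would define the scalar
\begin{equation}
\label{eqn:mixedBPP_p}
p(u):=\frac{1}{|B|}\int_{B}e^{-u(x)}\,\mathrm{d}x,
\end{equation}
and observe that, since $u\in\mathscr{U}$ is non-negative, the integrand satisfies $0\le e^{-u(x)}\le 1$, so $p(u)\in[0,1]$ for every admissible $u$. Averaging \eqref{eqn:mixedBPP_conditional} over the random count $N$ gives the unconditional Laplace functional
\begin{equation}
\label{eqn:mixedBPP_Lf}
L_{\Phi}(u)=\mathbb{E}_{N}\!\left[p(u)^{N}\right]=\mathbb{E}_{N}\!\left[e^{-sN}\right]\quad\text{with}\quad s:=-\ln p(u)\ge 0,
\end{equation}
which is exactly the Laplace transform $\mathcal{L}_{N}(s)$ of $N$ evaluated at the non-negative argument $s=-\ln p(u)$.

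Finally I would invoke the definition of the LT order \eqref{eqn:def_Lt_ordering}: the hypothesis $N_1\leq_{\mathrm{Lt}}N_2$ means $\mathcal{L}_{N_1}(s)\ge\mathcal{L}_{N_2}(s)$ for all $s>0$. Applying this at $s=-\ln p(u)$ for each fixed $u\in\mathscr{U}$ yields $L_{\Phi_1}(u)\ge L_{\Phi_2}(u)$ for all non-negative $u$, which is precisely $\Phi_1\leq_{\mathrm{Lf}}\Phi_2$ by Definition \ref{def_Lf_ordering}. Equivalently, one may argue through \eqref{eqn:def_Lt_ordering_conseq1} by noting that $t\mapsto t^{N}=e^{-N(-\ln t)}$ is, for each outcome of $N$, a completely monotonic function of $s=-\ln t$, so the expectation $\mathbb{E}_N[p(u)^N]$ is an expectation of a c.m.\ function and respects the LT order of $N$. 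I do not expect a genuine obstacle here: the only point requiring mild care is checking that the argument $s=-\ln p(u)$ stays in the admissible range $s\ge 0$ (which holds because $p(u)\le 1$) and handling the degenerate boundary case $p(u)=0$, i.e.\ $s=\infty$, which arises only when $u\equiv\infty$ on $B$ and can be treated by continuity. The crux of the proof is the clean observation in \eqref{eqn:mixedBPP_Lf} that the Laplace functional of a mixed binomial process is nothing but the Laplace transform of its count distribution composed with the scalar $p(u)$.
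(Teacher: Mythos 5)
Your proposal is correct and follows essentially the same route as the paper's own proof: both reduce the Laplace functional of the mixed binomial process to $\mathbb{E}_N\bigl[t^N\bigr]$ with $t=\frac{1}{|B|}\int_B e^{-u(x)}\,\mathrm{d}x\in[0,1]$, and then transfer the hypothesis $N_1\leq_{\mathrm{Lt}}N_2$ to the LF order by viewing $t^N=e^{-N(-\ln t)}$ as a Laplace transform evaluation (equivalently, via the complete monotonicity of $t^z$ in $z$ and \eqref{eqn:def_Lt_ordering_conseq1}, which is the phrasing the paper uses). Your explicit conditioning on $N=k$ and the remark about the boundary cases $p(u)\in\{0,1\}$ are minor elaborations of the same argument, not a different approach.
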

\begin{proof}
The proof is given in Appendix \ref{app:Proof_for_Lf_mixed_BPPs}.
\end{proof}
Similar to Theorem \ref{Lf_ordering_of_pert_lat_PPs}, Theorem \ref{Lf_ordering_of_mixed_BPPs} enables LF ordering of two point processes whenever an associated discrete random variable is LT ordered.

\section{Preservation of Stochastic Ordering of Point Processes}
\label{sec:Preserv_Ordering_of_Point_Processes}
In what follows, we will show that the LF ordering between two point processes is preserved after applying independent operations on point processes such as marking, thinning, random translation, and superposition of point processes.

\subsection{Marking}
\label{subsec:independent_marking}
Consider the $d$ dimensional Euclidean space $\mathbb{R}^d$, $d \geq 1$, as the state space of the point process. Consider a second space $\mathbb{R}^{\ell}$, called the space of marks. A marked point process $\widetilde{\Phi}$ on $\mathbb{R}^d\times\mathbb{R}^{\ell}$ (with points in $\mathbb{R}^d$ and marks in $\mathbb{R}^{\ell}$) is a locally finite, random set of points on $\mathbb{R}^d$, with some random vector in $\mathbb{R}^{\ell}$ attached to each point. A marked point process is said to be independently marked if, given the locations of the points in $\Phi$, the marks are mutually independent random vectors in $\mathbb{R}^{\ell}$, and if the conditional distribution of the mark $m_x$ of a point $x \in \Phi$ depends only on the location of this point $x$ it is attached to.
\begin{lemma}
\label{Lf_ordering_marked_PP}
Let $\Phi_1$ and $\Phi_2$ be two point processes in $\mathbb{R}^d$. Also let $\widetilde{\Phi}_1$ and $\widetilde{\Phi}_2$ be independently marked point processes with marks $m_x$ with identical distribution in $\mathbb{R}^{\ell}$. If $\Phi_1 \leq_{\mathrm{Lf}} \Phi_2$, then $\widetilde{\Phi}_1 \leq_{\mathrm{Lf}} \widetilde{\Phi}_2$.
\end{lemma}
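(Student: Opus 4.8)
The plan is to compare the Laplace functionals of the two marked processes directly, using the definition of LF ordering for random measures in \eqref{eqn:def_Lf_ordering}. A marked point process $\widetilde{\Phi}$ lives on the product space $\mathbb{R}^d\times\mathbb{R}^{\ell}$, so its Laplace functional is evaluated against non-negative test functions $v(x,m)$ on that product space. First I would write
\begin{equation}
\label{eqn:marked_Lf}
L_{\widetilde{\Phi}}(v) = \mathbb{E}\left[\exp\left(-\sum_{x\in\Phi} v(x,m_x)\right)\right],
\end{equation}
and then condition on the ground process $\Phi$. By the independent marking assumption, given the locations in $\Phi$ the marks $\{m_x\}$ are mutually independent with conditional distribution depending only on $x$. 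Hence the conditional expectation factorizes over the points:
\begin{equation}
\label{eqn:marked_condition}
\mathbb{E}\left[\left.\exp\left(-\sum_{x\in\Phi} v(x,m_x)\right)\right| \Phi\right] = \prod_{x\in\Phi}\mathbb{E}_{m}\left[e^{-v(x,m)}\right].
\end{equation}

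The key observation is that $\mathbb{E}_{m}[e^{-v(x,m)}]\in(0,1]$, so I can define a new non-negative function on $\mathbb{R}^d$ by
\begin{equation}
\label{eqn:marked_reduction}
\widetilde{u}(x) := -\log\mathbb{E}_{m}\left[e^{-v(x,m)}\right] \geq 0,
\end{equation}
which is well defined and belongs to $\mathscr{U}$. Taking the product in \eqref{eqn:marked_condition} over $x\in\Phi$ and re-expressing it as $\exp(-\sum_{x\in\Phi}\widetilde{u}(x))$, and then taking the outer expectation over $\Phi$, collapses the marked Laplace functional back to an ordinary Laplace functional of the ground process evaluated at $\widetilde{u}$:
\begin{equation}
\label{eqn:marked_collapse}
L_{\widetilde{\Phi}}(v) = \mathbb{E}\left[\exp\left(-\sum_{x\in\Phi}\widetilde{u}(x)\right)\right] = L_{\Phi}(\widetilde{u}).
\end{equation}
Crucially, since $\widetilde{\Phi}_1$ and $\widetilde{\Phi}_2$ use marks with \emph{identical} distribution, the function $\widetilde{u}$ induced by a given $v$ is the \emph{same} for both processes; only the ground process changes.

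With this reduction in hand, the conclusion is immediate. For any test function $v$ on the product space, $L_{\widetilde{\Phi}_1}(v)=L_{\Phi_1}(\widetilde{u})$ and $L_{\widetilde{\Phi}_2}(v)=L_{\Phi_2}(\widetilde{u})$ for the common $\widetilde{u}\in\mathscr{U}$. Since $\Phi_1\leq_{\mathrm{Lf}}\Phi_2$ means $L_{\Phi_1}(u)\geq L_{\Phi_2}(u)$ for \emph{all} $u\in\mathscr{U}$, applying this at $u=\widetilde{u}$ gives $L_{\widetilde{\Phi}_1}(v)\geq L_{\widetilde{\Phi}_2}(v)$. As $v$ was arbitrary, this is exactly $\widetilde{\Phi}_1\leq_{\mathrm{Lf}}\widetilde{\Phi}_2$.

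I expect the main obstacle to be bookkeeping rather than depth: one must be careful that the test function for a marked process is a bivariate $v(x,m)$ on $\mathbb{R}^d\times\mathbb{R}^{\ell}$, verify that the induced $\widetilde{u}$ is genuinely non-negative (which follows from $\mathbb{E}_m[e^{-v(x,m)}]\le 1$) so that it lies in $\mathscr{U}$ and the hypothesis $\Phi_1\leq_{\mathrm{Lf}}\Phi_2$ actually applies to it, and confirm that the identical-mark assumption is what guarantees $\widetilde{u}$ does not depend on which process we started from. The factorization in \eqref{eqn:marked_condition} is the heart of the argument and rests entirely on the independent-marking hypothesis; once that is justified, the remainder is a direct invocation of Definition \ref{def_Lf_ordering}.
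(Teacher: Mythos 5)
Your proof is correct and follows essentially the same route as the paper's: condition on the ground process, use independent marking to factorize the conditional expectation, absorb the mark average into the reduced test function $\tilde{u}(x)=-\log\int_{\mathbb{R}^{\ell}}e^{-v(x,m)}F(\mathrm{d}m\vert x)\in\mathscr{U}$, and conclude via $L_{\widetilde{\Phi}}(v)=L_{\Phi}(\tilde{u})$ and Definition \ref{def_Lf_ordering}. Your write-up is in fact slightly more explicit than the paper's in spelling out the conditioning step and in noting that identical mark distributions make $\tilde{u}$ common to both processes.
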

\begin{proof}
The proof is given in Appendix \ref{app:Proof_for_marked_PPs}.
\end{proof}

\subsection{Thinning}
A thinning operation uses a rule to delete points of a basic process $\Phi$, thus yielding the \emph{thinned point process} $\Phi_{\mathrm{th}}$, which can be considered as a subset of $\Phi$. The simplest thinning is \emph{$p$-thinning}: each point of $\Phi$ has probability $1-p$ of suffering deletion, and its deletion is independent of locations and possible deletions of any other points of $\Phi$. A natural generalization allows the retention probability $p$ to depend on the location $x$ of the point. A deterministic function $p(x)$ is given on $\mathbb{R}^d$, with $0 \leq p(x) \leq 1$. A point $x$ in $\Phi$ is deleted with probability $1-p(x)$ and again its deletion is independent of locations and possible deletions of any other points. The generalized operation is called \emph{$p(x)$-thinning}. In a further generalization the function $p$ is itself random. Formally, a random field $\boldsymbol{\pi} = \lbrace 0 \leq \pi(x) \leq 1:x\in\mathbb{R}^d \rbrace$ is given which is independent of $\Phi$. A realization $\varphi_{\mathrm{th}}$ of the thinned process $\Phi_{\mathrm{th}}$ is constructed by taking a realization $\varphi$ of $\Phi$ and applying $p(x)$-thinning to $\varphi$, using for $p(x)$ a sample $\lbrace p(x):x\in\mathbb{R}^d \rbrace$ of the random field $\boldsymbol{\pi}$. Given $\pi(x)=p(x)$ and given $\Phi=\varphi$, the probability of $x$ in $\Phi$ also belonging to $\Phi_{\mathrm{th}}$ is $p(x)$. As long as a independent thinning operation regardless of $p, p(x), \text{and } \pi(x)$ is applied on point processes, the LF ordering of the original pair of point processes is retained:
\begin{lemma}
\label{Lf_ordering_of_thinning_PPs}
Let $\Phi_1$ and $\Phi_2$ be two point processes in $\mathbb{R}^d$ and $\Phi_{\mathrm{th},1}$ and $\Phi_{\mathrm{th},2}$ be independently thinned point processes both with any identical independent thinning operation which could be either $p, p(x), \text{or } \pi(x)\text{-thinning}$ on both $\Phi_1$ and $\Phi_2$. If $\Phi_1 \leq_{\mathrm{Lf}} \Phi_2$, then $\Phi_{\mathrm{th},1} \leq_{\mathrm{Lf}} \Phi_{\mathrm{th},2}$.
\end{lemma}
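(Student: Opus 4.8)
The plan is to compute the Laplace functional of the thinned process and recognize it as the Laplace functional of the \emph{original} process evaluated at a new, still non-negative, test function; the hypothesis $\Phi_1 \leq_{\mathrm{Lf}} \Phi_2$ then applies directly at that function. First I would encode the thinning by attaching to each point $x \in \Phi$ an independent Bernoulli retention mark $b_x \in \{0,1\}$, equal to $1$ precisely when $x$ survives. Given a realization $\pi(x) = p(x)$ of the random thinning field $\boldsymbol{\pi}$ and given $\Phi$, we have $P(b_x = 1) = p(x)$ and $\sum_{x\in\Phi_{\mathrm{th}}}u(x) = \sum_{x\in\Phi} b_x\, u(x)$. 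Conditioning first on $\boldsymbol{\pi}$ and $\Phi$, the conditional independence of the marks factors the expectation into a product of elementary Bernoulli expectations, each equal to $1 - p(x)(1 - e^{-u(x)})$, giving
\begin{equation}
\mathbb{E}\left[e^{-\sum_{x\in\Phi}b_x u(x)}\mid\boldsymbol{\pi},\Phi\right]=\prod_{x\in\Phi}\left[1-\pi(x)\left(1-e^{-u(x)}\right)\right].
\end{equation}

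The crucial observation is that, for a fixed realization $p(\cdot)$ of the thinning field, this product is exactly $\exp\{-\sum_{x\in\Phi} v_p(x)\}$, where
\begin{equation}
v_p(x) := -\log\left[1 - p(x)\left(1 - e^{-u(x)}\right)\right].
\end{equation}
Since $u \in \mathscr{U}$ forces $1 - e^{-u(x)} \in [0,1]$ and $0 \leq p(x) \leq 1$, the argument of the logarithm lies in $[0,1]$, so $v_p(x) \geq 0$; that is, $v_p \in \mathscr{U}$ is itself an admissible test function. Taking the expectation over $\Phi$ with $\boldsymbol{\pi} = p$ fixed, and using that $\boldsymbol{\pi}$ is independent of $\Phi$ (so conditioning on $\boldsymbol{\pi}=p$ leaves the law of $\Phi$ unchanged), identifies the conditional thinned Laplace functional with $L_{\Phi}(v_p)$. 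Averaging over the field then yields the clean representation $L_{\Phi_{\mathrm{th}}}(u) = \mathbb{E}_{\boldsymbol{\pi}}[L_{\Phi}(v_{\boldsymbol{\pi}})]$.

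From here the ordering does the rest. For each realization $p$ of the common field, applying $\Phi_1 \leq_{\mathrm{Lf}} \Phi_2$ to the admissible function $v_p$ gives $L_{\Phi_1}(v_p) \geq L_{\Phi_2}(v_p)$. Because both thinned processes are built from the \emph{same} field $\boldsymbol{\pi}$, independent of both $\Phi_1$ and $\Phi_2$, I would finally take expectation over $\boldsymbol{\pi}$; monotonicity of expectation preserves the pointwise inequality, so that $L_{\Phi_{\mathrm{th},1}}(u) = \mathbb{E}_{\boldsymbol{\pi}}[L_{\Phi_1}(v_{\boldsymbol{\pi}})] \geq \mathbb{E}_{\boldsymbol{\pi}}[L_{\Phi_2}(v_{\boldsymbol{\pi}})] = L_{\Phi_{\mathrm{th},2}}(u)$. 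As $u \in \mathscr{U}$ was arbitrary, this is precisely $\Phi_{\mathrm{th},1} \leq_{\mathrm{Lf}} \Phi_{\mathrm{th},2}$. The main obstacle is bookkeeping rather than any hard estimate: I must check that $v_p$ genuinely lies in $\mathscr{U}$ for every admissible $u$ and every realization of the field (so the hypothesis is applicable), and I must justify the order of integration—conditioning on $\boldsymbol{\pi}$ first and averaging over it last—using the independence of $\boldsymbol{\pi}$ from the point processes. The three stated cases ($p$-, $p(x)$-, and $\pi(x)$-thinning) are all subsumed by this single random-field argument, with the first two corresponding to a degenerate (deterministic) field $\boldsymbol{\pi}$.
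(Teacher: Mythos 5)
Your proposal is correct and follows essentially the same route as the paper's proof: your $v_p(x) = -\log\left[1 - p(x)\left(1 - e^{-u(x)}\right)\right]$ is algebraically identical to the paper's $u_{\mathrm{p}}(x)=-\log\left(\exp(-u(x))p(x)+1-p(x)\right)$, and both arguments reduce the thinned Laplace functional to $L_{\Phi}$ evaluated at this transformed test function, handling $\pi(x)$-thinning by noting the inequality holds for every realization of the field and hence after averaging. Your write-up is simply more explicit than the paper's, supplying the Bernoulli-mark derivation of the thinned Laplace functional and the verification that the transformed function is non-negative, both of which the paper takes for granted.
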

\begin{proof}
The proof is given in Appendix \ref{app:Proof_for_thinning_PPs}.
\end{proof}
Since the thinned point process $\Phi_{\mathrm{th}}$ is a locally finite random set of points on $\mathbb{R}^d$, with a binary random variable in $\mathbb{R}^{+}$ attached to each point, independent thinning can be considered as the independent marking operation on a point process as discussed in the previous section.

\subsection{Random Translation}
\label{subsec:Random_Translation}
In this section, the stochastic operation that we consider is random translation. Each point $x$ in the realization of some initial point process $\Phi$ is shifted independently of its neighbors through a random vector $t_x$ in $\mathbb{R}^d$ where $\lbrace t_x \rbrace_{x}$ are independent each other and the conditional distribution of a random vector $t_x$ of a point $x \in \Phi$ depends only on the location of the point $x$. The resulting process is $\Phi_{\mathrm{rt}}:=\lbrace x+t_x:x\in\Phi \rbrace$. The random translation preserves the LF ordering of point process as follows:
\begin{lemma}
\label{Lf_ordering_translation_PP}
Let $\Phi_1$ and $\Phi_2$ be two point processes in $\mathbb{R}^d$ and $\Phi_{\mathrm{rt},1}$ and $\Phi_{\mathrm{rt},2}$ be the translated point processes with common distribution for the translation $t_x$. If $\Phi_1 \leq_{\mathrm{Lf}} \Phi_2$, then $\Phi_{\mathrm{rt},1} \leq_{\mathrm{Lf}} \Phi_{\mathrm{rt},2}$.
\end{lemma}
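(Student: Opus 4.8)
The plan is to compute the Laplace functional of the translated process directly and show that it collapses to the Laplace functional of the \emph{original} process evaluated at a modified test function that still lies in $\mathscr{U}$. Starting from the definition and using $\Phi_{\mathrm{rt}}=\lbrace x+t_x:x\in\Phi\rbrace$, I would write
\begin{equation}
L_{\Phi_{\mathrm{rt}}}(u)=\mathbb{E}\left[\exp\left(-\sum_{x\in\Phi}u(x+t_x)\right)\right],
\end{equation}
and then condition on a realization of $\Phi$. Since the displacements $\lbrace t_x\rbrace_x$ are mutually independent given $\Phi$ and the conditional law of each $t_x$ depends only on its point $x$, the conditional expectation factorizes across the points of $\Phi$.

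The key step is to define, for each $x$, the function
\begin{equation}
\hat{u}(x):=-\log\mathbb{E}_{t_x}\left[\exp\left(-u(x+t_x)\right)\right].
\end{equation}
Because $u\in\mathscr{U}$ is non-negative, the inner expectation lies in $(0,1]$, so $\hat{u}(x)\geq 0$ and hence $\hat{u}\in\mathscr{U}$. Carrying out the factorization and pulling the product back inside the expectation over $\Phi$, the Laplace functional reduces to
\begin{equation}
L_{\Phi_{\mathrm{rt}}}(u)=\mathbb{E}_{\Phi}\left[\exp\left(-\sum_{x\in\Phi}\hat{u}(x)\right)\right]=L_{\Phi}(\hat{u}).
\end{equation}
The crucial observation is that the common translation distribution makes $\hat{u}$ the \emph{same} non-negative function for both $\Phi_1$ and $\Phi_2$. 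Applying the hypothesis $\Phi_1\leq_{\mathrm{Lf}}\Phi_2$, which by Definition \ref{def_Lf_ordering} states that $L_{\Phi_1}(v)\geq L_{\Phi_2}(v)$ for every $v\in\mathscr{U}$, to the particular choice $v=\hat{u}$ then gives
\begin{equation}
L_{\Phi_{\mathrm{rt},1}}(u)=L_{\Phi_1}(\hat{u})\geq L_{\Phi_2}(\hat{u})=L_{\Phi_{\mathrm{rt},2}}(u)
\end{equation}
for all $u\in\mathscr{U}$, which is precisely $\Phi_{\mathrm{rt},1}\leq_{\mathrm{Lf}}\Phi_{\mathrm{rt},2}$.

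I expect the only delicate point to be the rigorous justification of the factorization step, namely interchanging the product over the (possibly countably infinite) points of $\Phi$ with the conditional expectation over the independent displacements, together with confirming that $\hat{u}$ is measurable so that it is a legitimate argument of the Laplace functional. This mirrors the marking argument of Lemma \ref{Lf_ordering_marked_PP}: a random translation is an independent displacement that can be viewed as attaching the mark $t_x$ to each point and then relocating it, so the same conditioning-and-factorization machinery applies and no estimate beyond the non-negativity of $\hat{u}$ is required.
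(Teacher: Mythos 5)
Your proposal is correct and follows essentially the same route as the paper: both reduce the Laplace functional of the translated process to $L_{\Phi}(\hat{u})$ with $\hat{u}(x)=-\log\int_{\mathbb{R}^d}\exp(-u(x+t))F(\mathrm{d}t\vert x)$ (the paper's $u_{\mathrm{t}}$), observe that this modified test function is non-negative and common to both processes, and then invoke Definition \ref{def_Lf_ordering}. Your write-up merely makes explicit the conditioning-and-factorization step and the marking analogy that the paper leaves implicit.
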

\begin{proof}
The proof is given in Appendix \ref{app:Proof_for_translation_PPs}.
\end{proof}
Similar to the independent thinning operation, since the random translated point process $\Phi_{\mathrm{rt}}$ is a locally finite random set of points on $\mathbb{R}^d$, with some random vector in $\mathbb{R}^{d}$ attached to each point, the random translation can be considered as the independent marking operation on a point process.

\subsection{Superposition}
Let $\Phi_1$ and $\Phi_2$ be two point processes. Consider the union
\begin{equation}
\label{eqn:Union_of_two_PPs}
\Phi = \Phi_1 \cup \Phi_2 \text{.}
\end{equation}
Suppose that with probability one the point sets $\Phi_1$ and $\Phi_2$ do not overlap. The set-theoretic union then coincides with the superposition operation of general point process theory. The superposition preserves the LF ordering of point processes as follows:
\begin{lemma}
\label{Lf_ordering_superposition_PP}
Let $\Phi_{1,i}$ and $\Phi_{2,i}, i=1,...,M$ be mutually independent point processes and $\Phi_1=\bigcup_{i=1}^M \Phi_{1,i}$ and $\Phi_2=\bigcup_{i=1}^M \Phi_{2,i}$ be the superposition of point processes. If $\Phi_{1,i} \leq_{\mathrm{Lf}} \Phi_{2,i}$ for $i=1,...,M$, then $\Phi_1 \leq_{\mathrm{Lf}} \Phi_2$.
\end{lemma}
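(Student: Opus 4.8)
The plan is to work directly with Laplace functionals, exploiting the fact that the Laplace functional of a superposition of independent point processes factorizes into a product over the components. First I would unwind the definitions: by Definition \ref{def_Lf_ordering}, the hypothesis $\Phi_{1,i} \leq_{\mathrm{Lf}} \Phi_{2,i}$ is exactly the pointwise statement $L_{\Phi_{1,i}}(u) \geq L_{\Phi_{2,i}}(u)$ for every $u \in \mathscr{U}$, and the conclusion $\Phi_1 \leq_{\mathrm{Lf}} \Phi_2$ is the statement $L_{\Phi_1}(u) \geq L_{\Phi_2}(u)$ for every such $u$. So it suffices to fix an arbitrary $u \in \mathscr{U}$ and compare the two Laplace functionals at $u$.

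Next I would establish the factorization. Because the component point sets do not overlap almost surely, the aggregate sum over the superposition splits additively,
\[ \sum_{x\in\Phi_1}u(x) = \sum_{i=1}^{M} \sum_{x\in\Phi_{1,i}}u(x), \]
and, invoking the mutual independence of $\{\Phi_{1,i}\}_{i=1}^{M}$, the exponential of this sum factorizes inside the expectation, giving
\[ L_{\Phi_1}(u) = \mathbb{E}\left[\prod_{i=1}^{M} e^{-\sum_{x\in\Phi_{1,i}}u(x)}\right] = \prod_{i=1}^{M} L_{\Phi_{1,i}}(u), \]
and identically $L_{\Phi_2}(u) = \prod_{i=1}^{M} L_{\Phi_{2,i}}(u)$. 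I would then combine the $M$ hypotheses termwise: since each factor $e^{-\sum_{x} u(x)} \in (0,1]$, every Laplace functional lies in $(0,1]$, so $L_{\Phi_{1,i}}(u) \geq L_{\Phi_{2,i}}(u) \geq 0$ for each $i$, and multiplying these nonnegative inequalities yields
\[ L_{\Phi_1}(u) = \prod_{i=1}^{M} L_{\Phi_{1,i}}(u) \geq \prod_{i=1}^{M} L_{\Phi_{2,i}}(u) = L_{\Phi_2}(u). \]
As $u \in \mathscr{U}$ was arbitrary, this is precisely $\Phi_1 \leq_{\mathrm{Lf}} \Phi_2$.

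I do not expect a serious obstacle here, since both moves are standard; the two points that need care, and which I would state explicitly rather than gloss over, are the factorization and the multiplication of inequalities. The factorization relies essentially on mutual independence across the index $i$ (without it the expectation of the product need not split), and the additive decomposition of the aggregate sum relies on the almost-sure non-overlap assumption stated before the lemma. The termwise multiplication $\prod a_i \geq \prod b_i$ is valid only because the Laplace functionals are nonnegative; I would either note the elementary fact that $a_1 \geq b_1 \geq 0$ and $a_2 \geq b_2 \geq 0$ imply $a_1 a_2 \geq b_1 b_2$ and iterate it, or equivalently argue by induction on $M$, so that the nonnegativity hypothesis is used transparently at each step.
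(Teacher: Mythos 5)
Your proof is correct and follows essentially the same route as the paper's: factorize the Laplace functional of the superposition into the product $\prod_{i=1}^{M} L_{\Phi_i}(u)$ via mutual independence, then multiply the termwise inequalities. In fact you are more careful than the paper, which leaves implicit both the additive splitting of the aggregate sum and the nonnegativity needed to multiply the $M$ inequalities.
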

\begin{proof}
The proof is given in Appendix \ref{app:Proof_for_superposition_PP}.
\end{proof}

\section{Applications to Wireless Networks}
In the following discussion, we will consider the applications of stochastic orders to wireless network systems.

\subsection{Cellular Networks}
\label{sec:App_to_Cellular_Networks}
In this section, the comparisons of performance metrics will be derived based on the LF ordering of point processes for spatial deployments of base stations (BSs) and mobile stations (MSs).

\subsubsection{System Model}
\label{subsec:System_Model}
\begin{figure}[tb]
\begin{minipage}{1\textwidth}
\centering
\begin{center}
\includegraphics[height=8.0cm,keepaspectratio]{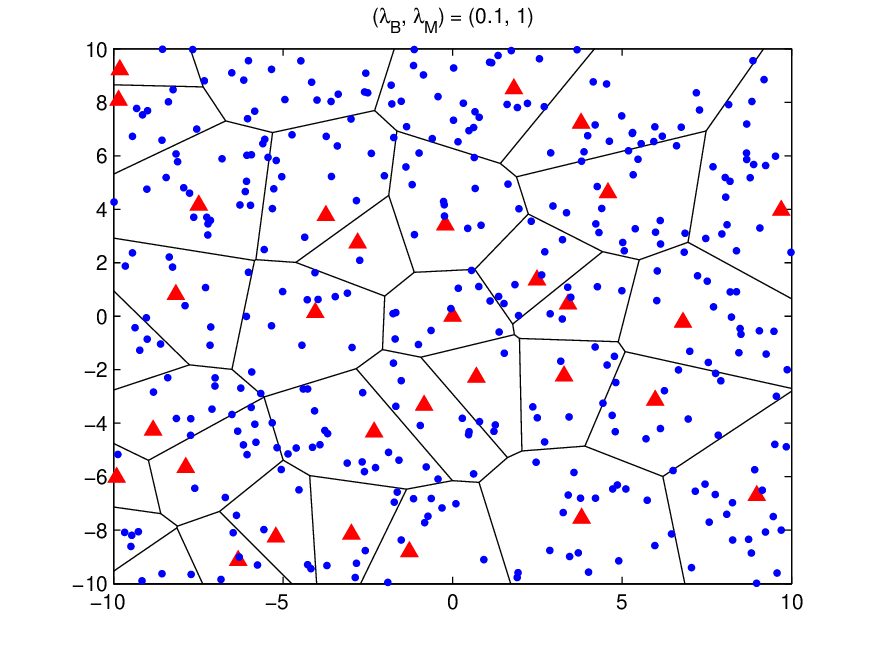}
\caption{Illustration of a cellular network. The triangles represent base stations which form a stationary Poisson point process $\Phi_{\mathrm{B}}$ with intensity $\lambda_{\mathrm{B}}=0.1$ and the dots represent users which also form a stationary Poisson point process $\Phi_{\mathrm{M}}$ with intensity $\lambda_{\mathrm{M}}=1.0$.}
\label{fig:System_model}
\end{center}
\end{minipage}
\end{figure}

We consider the downlink cellular network model consisting of BSs arranged according to some point process $\Phi_{_{\mathrm{B}}}$ in the Euclidean plane. For the deployment of BSs, a deterministic network such as lattice points or stochastic network such as a Poisson point process may be considered. Consider an independent collection of MSs, located according to some point process $\Phi_{_{\mathrm{M}}}$ which is independent of $\Phi_{_{\mathrm{B}}}$. Fig. \ref{fig:System_model} shows an example of cellular network consisting of stationary Poisson point processes with different intensities for BSs and MSs respectively. For a traditional cellular network, assume that each user associates with the closest BS, which would suffer the least path loss during wireless transmission. It is also assumed that the association between a BS and a MS is carried out in a large time scale compared to the coherence time of the channel. The cell boundaries are defined through the Voronoi tessellation of the BS process. Our goal is to compare performance metrics such as total cell coverage probability through stochastic ordering tools. The spatial coverage of cellular networks is also compared based on the LF order of the BS point processes.

In order for the total cell coverage probability to be compared, the signal to interference plus noise ratio (SINR) of a user at $x \in \Phi_{_{\mathrm{M}}}$ should be quantified. The effective channel power between a user $x$ and its associated typical BS $B_{0}$ is $h_{\mathrm{S}}^{(x)}$, which is a non-negative RV. The SINR with additive noise power $\sigma^2$ is given by
\begin{equation}
\label{eqn:def_SINR}
\emph{SINR}(x) = \frac{h_{\mathrm{S}}^{(x)}g(\Vert x \Vert)}{\sigma^2+I(x)} \text{,}
\end{equation}
where $g(\cdot):\mathbb{R}^{+}\rightarrow\mathbb{R}^{+}$ is the path-loss function which is a continuous, positive, non-increasing function of the Euclidean distance $\Vert x \Vert$ from the user located at $x$ to the typical BS $B_{0}$. The following is an example of a path-loss model \cite{Ilow1998, Haenggi2008, Win2009, Baccelli2009a}:
\begin{equation}
\label{eqn:path_loss_model}
g(\Vert x \Vert)=(a+b \Vert x \Vert^{\delta})^{-1}
\end{equation}
for some $b > 0$, $\delta > d$ and $a \in \lbrace 0,1 \rbrace$, where $\delta$ is called the path-loss exponent, $a$ determines whether the path-loss model belongs to a singular path-loss model ($a=0$) or a non-singular path-loss model ($a=1$), and $b$ is a compensation parameter to keep the total receive power normalized regardless the values of path-loss exponent. In \eqref{eqn:def_SINR}, $I(x)$ is the accumulated interference power at a user located at $x$ given by
\begin{equation}
\label{eqn:Interference_model}
I(x)=\sum_{y\in\Phi_{_{\mathrm{B}}}\setminus \{B_{0}\}}h_{\mathrm{I}}^{(y)} g(\Vert y-x \Vert)
\end{equation}
where $\Phi_{_{\mathrm{B}}}$ denotes the set of all BSs which is modeled as a point process and $h_{\mathrm{I}}^{(y)}$ is a positive random variable capturing the (power) fading coefficient between a user $x$ and the $y^{\mathrm{th}}$ interfering BS. Moreover, $h_{\mathrm{I}}^{(y)}$ are i.i.d. random variables and independent of $\Phi_{_{\mathrm{M}}}$ and $\Phi_{_{\mathrm{B}}}$.

\subsubsection{Ordering of Performance Metrics in a Cellular Network}
In the following discussion, we will introduce performance metrics involving the stochastic ordering of aggregate process in the cell $\mathfrak{C}_{0}$ which is associated with the BS $B_{0}$. By studying spatial character of networks and investigating the spatial distributions of mobile users, we can compare system performances using stochastic ordering approach without actual system performance evaluation. In addition, the preservation properties of the LF order in Lemma \ref{Lf_ordering_marked_PP}-\ref{Lf_ordering_superposition_PP} guarantee the performance comparison results based on the LF ordering of point processes are not changed with respect to any identical independent random operation on $\Phi_{_{\mathrm{M}}}$ or $\Phi_{_{\mathrm{B}}}$ such as marking, thinning, translation and superposition.

\paragraph{Total Cell Coverage Probability}
\label{subsec:Tot_Cell_Cov_Prob}
In multicast/broadcast scenarios, multiple users receive a common signal from their associated BS. Therefore, the probability that SINRs of all served users are greater than a minimum threshold $T$ is an important measure to ensure the signal reception quality of every user and it is called \emph{total cell coverage probability}. This metric can be ordered, if the underlying point processes are LF ordered:
\begin{theorem}
\label{Tot_Cell_Cov_Prob}
Let $\Phi_{_{\mathrm{B}}}$ be an arbitrary fixed BS deployment. Also, let $\Phi_{_{\mathrm{M}_1}}$ and $\Phi_{_{\mathrm{M}_2}}$ be two point processes for MS deployments, and $x_{1}^{1},..., x_{N_1}^{1} \in \Phi_{_{\mathrm{M}_1}} \cap \mathfrak{C}_0$ and $x_{1}^{2},..., x_{N_2}^{2} \in \Phi_{_{\mathrm{M}_2}} \cap \mathfrak{C}_{0}$ be two user location sets belong to the typical Voronoi cell $\mathfrak{C}_0$. $N_1$ and $N_2$ are random variables that are functions of $\Phi_{_{\mathrm{M}_1}}$ and $\Phi_{_{\mathrm{M}_2}}$ respectively. Let $\{h_{\mathrm{S}}^{(x)}\}, x \in (\Phi_{_{\mathrm{M}_1}}\cup\Phi_{_{\mathrm{M}_2}})\cap\mathfrak{C}_{0}$ in \eqref{eqn:def_SINR} be exponentially distributed independent RVs, and $\{h_{\mathrm{I}}^{(y)}\}, y \in \Phi_{_{\mathrm{B}}}$ in \eqref{eqn:Interference_model} be independent RVs, that are also independent of $\{h_{\mathrm{S}}^{(x)}\}$, and $\Phi_{_{\mathrm{M}_1}}$ and $\Phi_{_{\mathrm{M}_2}}$. If $\Phi_{_{\mathrm{M}_1}} \leq_{\mathrm{Lf}} \Phi_{_{\mathrm{M}_2}}$ then for $T>0$
\begin{equation}
\label{eqn:Tot_Cell_Cov_Prob}
P\left(\text{SINR}(x_{1}^{1}) \geq T,\dots,\text{SINR}(x_{N_1}^{1}) \geq T\right) \geq  P\left(\text{SINR}(x_{1}^{2}) \geq T,\dots,\text{SINR}(x_{N_2}^{2}) \geq T\right)
\end{equation}
where $P\left(\cdot\right)$ is over all RVs, $h_{\mathrm{S}}^{(x)}$, $h_{\mathrm{I}}^{(y)}$, $\Phi_{_{\mathrm{M}_1}}$, and $\Phi_{_{\mathrm{M}_2}}$.
\end{theorem}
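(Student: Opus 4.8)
The plan is to express the joint coverage event as a Laplace functional of the mobile point process $\Phi_{_{\mathrm{M}}}$ evaluated at a suitable (random) test function, and then to invoke the hypothesis $\Phi_{_{\mathrm{M}_1}} \leq_{\mathrm{Lf}} \Phi_{_{\mathrm{M}_2}}$ via Definition \ref{def_Lf_ordering}. The key observation is that the interference $I(x)$ in \eqref{eqn:Interference_model} is generated by the \emph{fixed} base-station process $\Phi_{_{\mathrm{B}}}$ together with the fading $\{h_{\mathrm{I}}^{(y)}\}$, and is therefore independent of the mobile process; likewise the cell $\mathfrak{C}_0$ is a deterministic region once $\Phi_{_{\mathrm{B}}}$ is fixed. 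Hence I would condition on the interference fading $\mathcal{I} := \{h_{\mathrm{I}}^{(y)}\}_y$ first and defer the expectation over $\Phi_{_{\mathrm{M}}}$ to the very end.

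First I would exploit the exponential assumption on the serving links. Conditioned on $\Phi_{_{\mathrm{M}}}$ and on $\mathcal{I}$, the events $\{\text{SINR}(x) \geq T\}$ for distinct users $x \in \Phi_{_{\mathrm{M}}} \cap \mathfrak{C}_0$ are independent, since the $h_{\mathrm{S}}^{(x)}$ are independent, and from \eqref{eqn:def_SINR} each has conditional probability
\begin{equation}
P\!\left(h_{\mathrm{S}}^{(x)} \geq \tfrac{T(\sigma^2 + I(x))}{g(\Vert x \Vert)}\right) = \exp\!\left(-\mu\,\tfrac{T(\sigma^2 + I(x))}{g(\Vert x \Vert)}\right),
\end{equation}
with $\mu$ the rate of the exponential. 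Multiplying over the users in the cell, the conditional joint coverage probability becomes $\exp\bigl(-\sum_{x \in \Phi_{_{\mathrm{M}}}} v_{\mathcal{I}}(x)\bigr)$, where I introduce the non-negative test function
\begin{equation}
v_{\mathcal{I}}(x) := \mu\,\frac{T\bigl(\sigma^2 + I(x)\bigr)}{g(\Vert x \Vert)}\,\mathbf{1}\{x \in \mathfrak{C}_0\}.
\end{equation}
Users outside $\mathfrak{C}_0$ contribute a factor of one, so extending the product over all of $\Phi_{_{\mathrm{M}}}$ is harmless; and because $g>0$ and $I(x)\geq 0$, we have $v_{\mathcal{I}} \in \mathscr{U}$ for every realization of $\mathcal{I}$.

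Next I would take the expectation over $\Phi_{_{\mathrm{M}}}$ with $\mathcal{I}$ still held fixed. By the definition of the Laplace functional \eqref{eqn:def_Lf_of_PP}, this conditional expectation is exactly $L_{\Phi_{_{\mathrm{M}}}}(v_{\mathcal{I}})$. Since $v_{\mathcal{I}}$ is a fixed element of $\mathscr{U}$ once $\mathcal{I}$ is given, the hypothesis applied through \eqref{eqn:def_Lf_ordering} gives $L_{\Phi_{_{\mathrm{M}_1}}}(v_{\mathcal{I}}) \geq L_{\Phi_{_{\mathrm{M}_2}}}(v_{\mathcal{I}})$ for \emph{every} realization of $\mathcal{I}$. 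Finally, since $\mathcal{I}$ has the same law in both systems (the fading $\{h_{\mathrm{I}}^{(y)}\}$ is common and independent of the mobiles), I would integrate this pointwise inequality over $\mathcal{I}$ and use monotonicity of expectation to obtain \eqref{eqn:Tot_Cell_Cov_Prob}.

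The step I expect to require the most care is the order of conditioning. A naive attempt that averages out the interference fading $\{h_{\mathrm{I}}^{(y)}\}$ \emph{before} the mobiles would couple all users through the shared field, and the argument of the Laplace transform of $h_{\mathrm{I}}$ would no longer factor as a sum $\sum_{x} v(x)$ over the mobile points, obstructing the identification with a Laplace functional. Conditioning on $\mathcal{I}$ first, and only integrating it out after the LF inequality has been established, sidesteps this coupling. The remaining points to verify are the measurability of $v_{\mathcal{I}}$ and the fact that $\mathfrak{C}_0$ is determined by the fixed $\Phi_{_{\mathrm{B}}}$ alone, so that the indicator $\mathbf{1}\{x\in\mathfrak{C}_0\}$ introduces no hidden dependence on $\Phi_{_{\mathrm{M}}}$.
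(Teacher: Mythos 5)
Your proof is correct and follows essentially the same route as the paper's: reduce the joint coverage event, via the exponential serving-link assumption, to $\exp\left(-\sum_{x\in\Phi_{_{\mathrm{M}}}}u(x)\right)$ with $u(x)=T(\sigma^2+I(x))/g(\Vert x\Vert)\,\mathbf{1}\{x\in\mathfrak{C}_0\}$, recognize the expectation over the mobile process as the Laplace functional \eqref{eqn:def_Lf_of_PP}, and invoke $\Phi_{_{\mathrm{M}_1}}\leq_{\mathrm{Lf}}\Phi_{_{\mathrm{M}_2}}$ through Definition \ref{def_Lf_ordering}. The one substantive difference is in your favor: the paper conditions only on $\Phi_{_{\mathrm{M}}}$ in \eqref{eqn:tot_cov_prob2}--\eqref{eqn:tot_cov_prob5}, so the random $I(x)$ is still buried inside what it then treats as a deterministic element of $\mathscr{U}$; your device of freezing the interference fading $\mathcal{I}=\{h_{\mathrm{I}}^{(y)}\}$ first, applying the LF inequality pointwise in $\mathcal{I}$ (legitimate because $\{h_{\mathrm{I}}^{(y)}\}$ is independent of the mobile processes and common to both scenarios), and only then integrating it out is exactly what is needed to make both the factorization into a product over users and the membership $v_{\mathcal{I}}\in\mathscr{U}$ rigorous.
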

\begin{proof}
The proof is given in Appendix \ref{app:Proof_for_Tot_Cell_Cov_Prob}.
\end{proof}
Note that Theorem \ref{Tot_Cell_Cov_Prob} holds for any arbitrary $\Phi_{_{\mathrm{B}}}$ as long as it is identical under both scenarios. Equation \eqref{eqn:Tot_Cell_Cov_Prob} shows that LF ordering of user point processes imply ordering of the multivariate complementary cumulative distribution functions (CCDFs) of the SINRs.

\paragraph{Network Spatial Coverage}
\label{subsec:Net_Spa_Cov}
The network spatial coverage is an important performance metric to design BS deployment in cellular networks. We assume that BSs are distributed by a point process $\Phi_{_{\mathrm{B}}}$ and each BS has a fixed radius of coverage $R$. Denote the random number of BSs covering a fixed location $y$ by
\begin{equation}
\label{eqn:net_spa_cov1}
S(y)=\sum_{x\in\Phi_{_{\mathrm{B}}}}\mathbf{1}\lbrace y \in B_x(R) \rbrace \text{,}
\end{equation}
where $B_x(R)$ is a $d$-dimensional ball of radius $R$ centered at the point $x$. Denote the probability generating function of the random number of BSs covering location $y$ by $G(t)=\mathbb{E}[t^{S(y)}], 0 \leq t \leq 1$. Since $0 \leq t \leq 1$, $t^z$ is a c.m. function with $z$. Note that $1-G(0)$ represents the probability whether the location $y$ is covered by at least one BS from the definition of the probability generation function. Thus, if $\Phi_{_{\mathrm{B}_1}} \leq_{\mathrm{Lf}} \Phi_{_{\mathrm{B}_2}}$, then $G_1(t) \geq G_2(t)$ from the property of LT ordering in \eqref{eqn:def_Lt_ordering_conseq1} and consequently $1-G_1(0) \leq 1-G_2(0)$. This means the probability that any arbitrary point $y$ in $\mathbb{R}^d$ is covered by at least one BS with the cell deployment by $\Phi_{_{\mathrm{B}_1}}$ is always less than the probability with $\Phi_{_{\mathrm{B}_2}}$. Due to random effects of real systems such as shadowing, different transmission power per each BS, and obstacles, the range of coverage $R$ can be a non-negative random variable. Since the random range of coverage $R$ can be considered as independent marking $m=R$ in Section \ref{subsec:independent_marking}, the ordering of spatial coverage probabilities still holds from Lemma \ref{Lf_ordering_marked_PP} under the assumption that the random range of coverage $R$ is independent of the BS deployment $\Phi_{_{\mathrm{B}}}$. The ordering of spatial coverage probabilities also holds with random ellipse or square instead of a ball $B_x(R)$ from Lemma \ref{Lf_ordering_marked_PP}.

With performance metrics such as network spatial coverage and network interference, our study can provide design guidelines for network deployment to increase spatial coverage of networks or provide less interference from networks. As an example, consider the effects of clustering as in Section \ref{subsec:Ordering_of_Cluseter_PPs}. In the case that the daughter clusters are assumed to be Poisson, the clustering of nodes diminishes a point process in the LF order. From an interference point of view, the clustering of interfering nodes causes less interference in the LT order between interference distributions. This translates into an increased coverage probability and improved capacity for the system. However, the clustering of nodes also causes less spatial coverage. Therefore, the proper point process for network deployment should be studied for balancing between interference and spatial coverage.

\subsection{Cognitive Networks}
\label{sec:App_to_Cognitive_Networks}
In the following discussion, we will consider the applications of stochastic orders to cognitive network systems where there is an increasing interest in developing efficient methods for spectrum management and sharing.

\subsubsection{System Model}
\begin{figure}[tb]
\begin{minipage}{1\textwidth}
\centering
\begin{center}
\includegraphics[height=8.0cm,keepaspectratio]{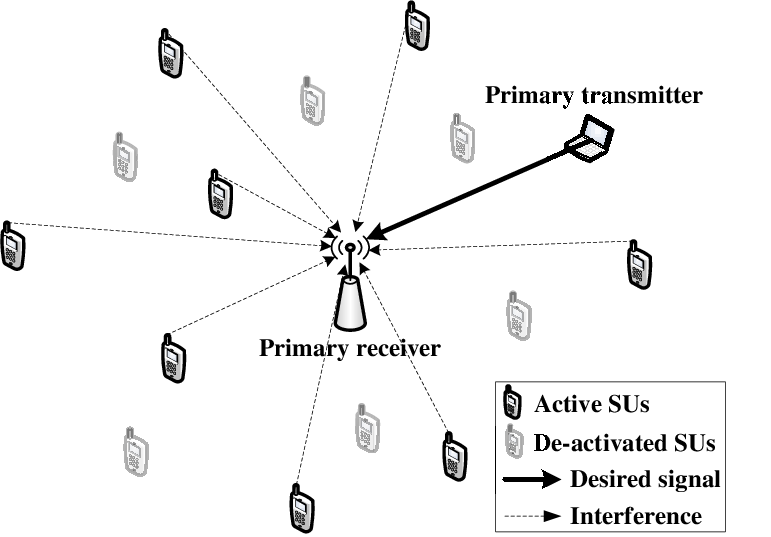}
\caption{Illustration of a cognitive network.}
\label{fig:Cognitive_network_SUs}
\end{center}
\end{minipage}
\end{figure}

Let us consider an underlay cognitive radio network which contains a primary user (PU) and many secondary users (SUs) with an average interference power constraint $\Gamma_{\mathrm{I}}$. The PU is located at the origin. The $L$ SUs are uniformly randomly located in a certain area $B \subset \mathbb{R}^d$ \cite{Pai2009, Sung2011, Grace2012}. It is assumed that there is a BS to coordinate the SUs' transmission. In order to satisfy the interference constraint, the BS selects only $N$ active number of users out of $L$ total users through user selection schemes and the selected active SUs are allowed to transmit their signals to the BS. User selection creates a thinned point process $\Phi_{_{\mathrm{SU,th}}}$ for the active SUs which has uniformly distributed $N$ points over the area $B$. The resulting thinned point process can be considered as a mixed binomial point processes as in Section \ref{subsec:Mixed_BPPs}. The system model of the cognitive radio network is illustrated in Fig. \ref{fig:Cognitive_network_SUs}. Under the system model, based on Theorem \ref{Lf_ordering_of_mixed_BPPs}, one can design user selection schemes which guarantee the same average interference power from the active SUs to the PU and the same average sum rate for the active SUs. On the other hand, the \emph{distribution} of the instantaneous interferences from the active SUs to the PU are such that they are LT ordered and cause ordered performance metrics such as a coverage probability and an achievable rate of the PU according to the user selection schemes. We now discuss these in detail.

\paragraph{Average Interference Power Constraint}
\label{avg_interf_from_SUs}
The instantaneous aggregate interference from the active SUs to the PU, can be expressed as:
\begin{equation}
\label{eqn:def_agg_interf_from_SUs}
I_{_{\mathrm{SU}}}=\sum_{x\in\Phi_{_{\mathrm{SU,th}}}}h_{\mathrm{I}}^{(x)}g(\Vert x \Vert) \text{,}
\end{equation}
where $\Phi_{_{\mathrm{SU,th}}}$ is a point process for the active SUs and $h_{\mathrm{I}}^{(x)}$ is a positive random variable capturing the (power) fading coefficient between an active SU located at $x$ and the PU. From Campbell's theorem, the average of aggregate interference from the active SUs, $\mathbb{E}[I_{_{\mathrm{SU}}}]$ is the same as long as the average number of the active SUs is fixed to $\mu=\mathbb{E}[N]$ regardless of the distribution for $N$ (For a proof, please see Appendix \ref{app:Proof_for_Cognitive_networks}). Therefore, we need to select $\mu$ in order to satisfy the average interference power constraint $\mathbb{E}[I_{_{\mathrm{SU}}}] \leq \Gamma_{\mathrm{I}}$.

\paragraph{Average Sum Rate}
\label{avg_sum_of_rates_of_SUs}
On the other hand, if there is no interference between the active SUs by adopting code-division multiple access (CDMA), the instantaneous sum of achievable rates of the active SUs, $C_{_{\mathrm{SU}}}$ is also random and can be expressed as
\begin{equation}
\label{eqn:sum_of_rates_of_SUs}
C_{_{\mathrm{SU}}}(z)=\sum_{x\in\Phi_{_{\mathrm{SU,th}}}}\log\left(1+\frac{h_{\mathrm{S}}^{(x)}g(\Vert x-z \Vert)}{\sigma^2+I_{_{\mathrm{P}}}(x)}\right) \text{,}
\end{equation}
where $h_{\mathrm{S}}^{(x)}$ is a effective fading channel between an active SU located at $x$ and the BS for SUs located at $z$, $I_{_{\mathrm{P}}}(x)=h_{\mathrm{I}}^{(x)}g(\Vert x \Vert)$ is the interference power from the PU to the SU located at $x$, and $\sigma^2$ is a additive noise power. From Campbell's theorem, the average of sum of achievable rates of active SUs, $\mathbb{E}[C_{_{\mathrm{SU}}}(z)]$ is the same regardless of the point processes for the active SUs as long as the average number of the active SUs is equal to $\mu=\mathbb{E}[N]$ (Proof is the same as the average interference case in Appendix \ref{app:Proof_for_Cognitive_networks} with a different function).

\subsubsection{Randomized User Selection Scheme for Secondary Users}
\label{subsubsec:User_Selection_Scheme}
We now discuss different ways to perform randomized user selection which corresponds to different ways of thinning the SU point process. One can apply the stochastic ordering tool to design user selection schemes based on Theorem \ref{Lf_ordering_of_mixed_BPPs} which produce mixed binomial point processes for active SUs. Given $\mu=\mathbb{E}[N]$ which satisfies the interference constraint $\mathbb{E}[I_{_{\mathrm{SU}}}]\leq\Gamma_{\mathrm{I}}$, the smaller LT ordered random number of active SUs $N$ provides the smaller LF ordered point process $\Phi_{_{\mathrm{SU,th}}}$, that is $N_1 \leq_{\mathrm{Lt}} N_2 \Rightarrow \Phi_{_{\mathrm{SU,th_1}}} \leq_{\mathrm{Lf}} \Phi_{_{\mathrm{SU,th_2}}}$ from Theorem \ref{Lf_ordering_of_mixed_BPPs}. Consequently, the resulting mixed binomial point processes cause LT ordered aggregate interferences to the PU, $I_{_{\mathrm{SU_1}}} \leq_{\mathrm{Lt}} I_{_{\mathrm{SU_2}}}$. The LT ordered aggregate interferences from the active SUs to the PU yield ordered performance metrics for the PU such as a coverage probability and an achievable rate due to the LT ordered interferences and the coverage probability and the achievable rate having c.m. property with respect to the interferences \cite{Lee2014}. On the other hand, the average sum of achievable rates of the active SUs and the average of interference power remain the same, $\mathbb{E}[C_{_{\mathrm{SU}_1}}] = \mathbb{E}[C_{_{\mathrm{SU}_2}}]$ and $\mathbb{E}[I_{_{\mathrm{SU}_1}}] = \mathbb{E}[I_{_{\mathrm{SU}_2}}]$ as discussed in Section \ref{avg_interf_from_SUs} and \ref{avg_sum_of_rates_of_SUs}.

We now give examples of a user selection scheme using the stochastic ordering approach. If the active SUs are chosen among $L$ total SUs with a probability $p$ independently, the number of active SUs is a binomial random variable, $N_{_{\mathrm{B}}}$. The number of active SUs can follow discrete distributions other than binomial if different modes of operation are adopted. For another example, if the SUs are selected before a predetermined failure numbers $r$ with a probability $p$ occurs, then the number of selected active SUs follows a negative binomial distribution with parameter $r$ and $p$ denoted as $N_{_{\mathrm{NB}}}$. Since $N_{_{\mathrm{NB}}} \leq_{\mathrm{Lt}} N_{_{\mathrm{B}}}$ \cite{Zeng2014}, $\Phi_{_{\mathrm{NB}}} \leq_{\mathrm{Lf}} \Phi_{_{\mathrm{B}}}$ from Theorem \ref{Lf_ordering_of_mixed_BPPs}. Therefore, the aggregate interferences from the active SUs are LT ordered $I_{_{\mathrm{NB}}} \leq_{\mathrm{Lt}} I_{_{\mathrm{B}}}$, while $\mathbb{E}[C_{_{\mathrm{NB}}}] = \mathbb{E}[C_{_{\mathrm{B}}}]$ and $\mathbb{E}[I_{_{\mathrm{NB}}}] = \mathbb{E}[I_{_{\mathrm{B}}}] \leq \Gamma_{\mathrm{I}}$. For these operations, the BS for SUs only needs to know the number of total SUs $L$ and the average number of active SUs $\mu$.

From the discussion in previous Section \ref{avg_interf_from_SUs} and \ref{avg_sum_of_rates_of_SUs}, it is noted that the smaller LT ordered random number of active SUs $N$ causes less interference to the PU, while the average interference $\mathbb{E}[I_{_{\mathrm{SU}}}]$ and average sum rate $\mathbb{E}[C_{_{\mathrm{SU}}}]$ are the same. From Theorem 5.A.21 in \cite{Shaked} and the discussion in Section \ref{subsec:Pert_lat_PPs}, the smallest LT ordered random number $N_{\mathrm{min}}$ is $P\lbrace N=0 \rbrace=1-P\lbrace N=2\mu \rbrace=1/2$ and the biggest $N_{\mathrm{max}}$ is the fixed $N=\mu$. Even though the $N_{\mathrm{min}}$ causes the smallest LT ordered interference to the PU, when $N=2\mu$, it causes large instantaneous interference to the PU at this moment. Otherwise, the $N_{\mathrm{max}}$ provides balanced traffic from the active SUs since the fixed number of random set of users $\mu$ is always selected. However, it causes the bigger LT ordered interference to the PU than any other distributions for active random SUs $N$. Therefore, the proper distribution for random number of active SUs should studied for balancing peakyness of instantaneous interference power and fairness of active SUs' traffic.

\section{Numerical Results}
\label{sec:Numerical_Results}
In this section, we verify our theoretical results through Monte Carlo simulations.

\begin{figure}[tb]
\begin{minipage}{1\textwidth}
\centering
\begin{center}
\includegraphics[height=8.0cm,keepaspectratio]{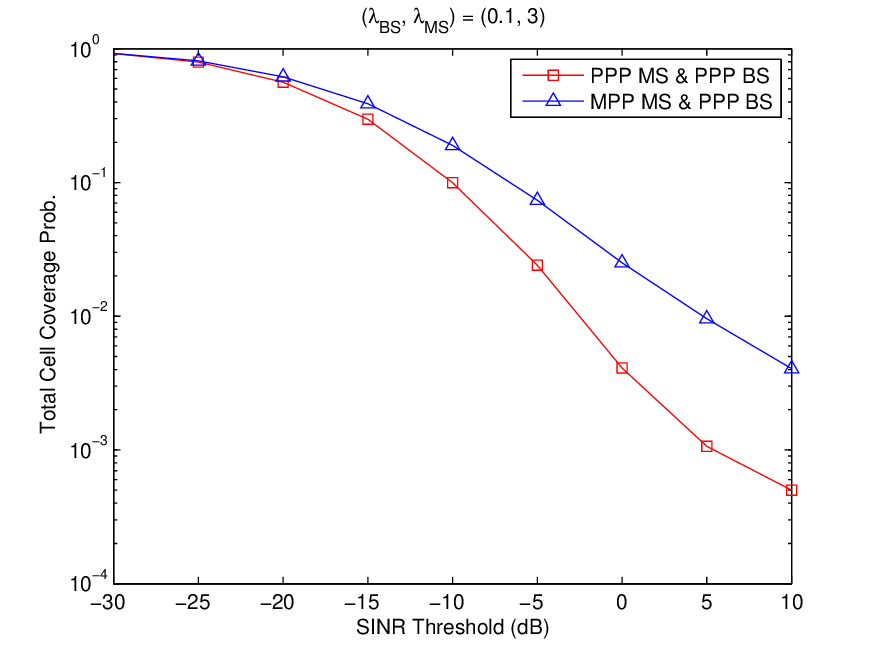}
\caption{Total cell coverage probabilities}
\label{fig:Tot_cell_cov_prob_PPP}
\end{center}
\end{minipage}
\end{figure}

\subsection{Cellular Networks}
We show in Fig. \ref{fig:Tot_cell_cov_prob_PPP} the total cell coverage probabilities. It is assumed that the BS distribution is a stationary Poisson point process $\Phi_{_{\mathrm{B}}}$, and the compared user distributions follow also a Poisson point process $\Phi_{_{\mathrm{PPP}}}$ and mixed Poisson process $\Phi_{_{\mathrm{MPP}}}$ with same intensity $\lambda_{\mathrm{M}}$. Since $\Phi_{_{\mathrm{ MPP}}} \leq_{\mathrm{Lf}} \Phi_{_{\mathrm{PPP}}}$, from Theorem \ref{Tot_Cell_Cov_Prob}, the total cell coverage probability of the users distributed by $\Phi_{_{\mathrm{MPP}}}$ is greater than that of $\Phi_{_{\mathrm{PPP}}}$. Using the stochastic ordering approach, one can compare the coverage probabilities by investigating the spatial distributions of mobile users without actual system evaluation.

\subsection{Cognitive Networks}
\begin{figure}[tb]
\begin{minipage}{1\textwidth}
\centering
\begin{center}
\includegraphics[height=8.0cm,keepaspectratio]{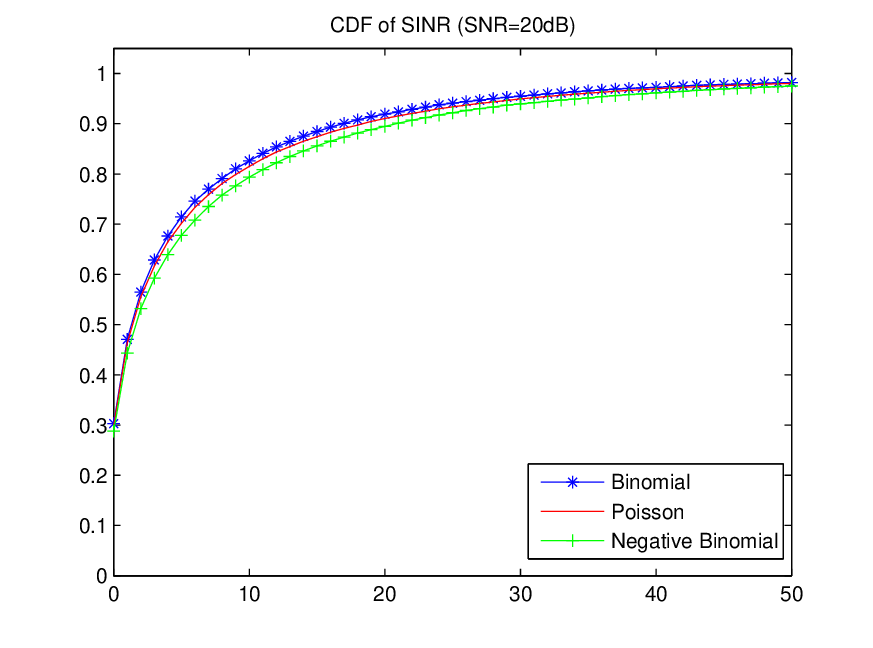}
\caption{Coverage probabilities of primary user}
\label{fig:CDF_SIR_cognitive_networks}
\end{center}
\end{minipage}
\end{figure}

\begin{figure}[tb]
\begin{minipage}{1\textwidth}
\centering
\begin{center}
\includegraphics[height=8.0cm,keepaspectratio]{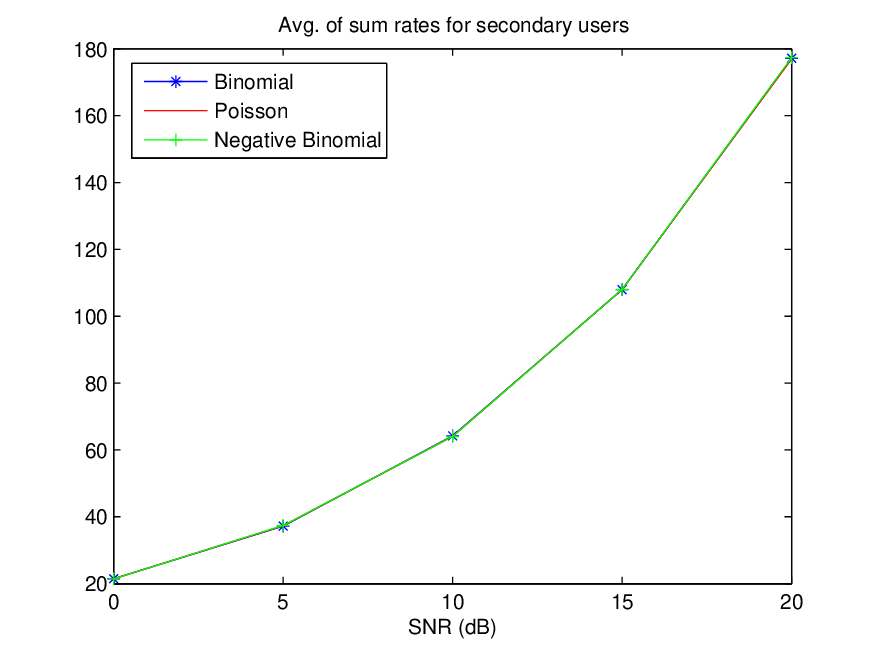}
\caption{Average sum rates of secondary users}
\label{fig:Avg_rates_cognitive_networks}
\end{center}
\end{minipage}
\end{figure}

In Fig. \ref{fig:CDF_SIR_cognitive_networks}, the CDFs of the SINR of the primary user are shown where the spatial distributions of the active SUs are different mixed binomial processes. The LT ordered number of the active SUs $N_{_{\mathrm{NB}}} \leq_{\mathrm{Lt}} N_{_{\mathrm{P}}} \leq_{\mathrm{Lt}} N_{_{\mathrm{B}}}$ of the negative binomial, Poisson and binomial RVs ensures the ordering $\Phi_{_{\mathrm{NB}}} \leq_{\mathrm{Lf}} \Phi_{_{\mathrm{P}}} \leq_{\mathrm{Lf}} \Phi_{_{\mathrm{B}}}$ of the corresponding mixed binomial processes, from Theorem \ref{Lf_ordering_of_mixed_BPPs}. Since the aggregate interferences from LF ordered point processes for spatial distributions of the active SUs are LT ordered, we observe $\emph{SIR}_{_{\mathrm{NB}}} \geq_{\mathrm{st}} \emph{SIR}_{_{\mathrm{P}}} \geq_{\mathrm{st}} \emph{SIR}_{_{\mathrm{B}}}$ which are the ratio between the PU's effective fading channel and the interference from active SUs \cite{Lee2014}. This is because when the PU's effective fading channel is exponentially distributed and interferences are LT ordered, the SIRs are reverse ordered. The aforementioned relationship between SIR and interference distributions holds as long as the CCDF of the PU's effective fading channel is a c.m. function by the property of LT ordering in \eqref{eqn:def_Lt_ordering_conseq1}. An example is the exponential distribution. Despite the ordering of the SIRs in Fig. \ref{fig:CDF_SIR_cognitive_networks}, the average of the sum of achievable rates of the active SUs in \eqref{eqn:sum_of_rates_of_SUs} are the same regardless of LF ordered point processes as shown in Fig. \ref{fig:Avg_rates_cognitive_networks}. In order to cause less interference from active SUs to the PU, one can use the stochastic approach to design the user selection schemes with the minimum information such as the total number of SUs $L$ and the average number of active SUs $\mu$ as discussed in Section \ref{sec:App_to_Cognitive_Networks}.

\section{Summary}
\label{sec:Summary}
In this paper, Laplace functional ordering of broad classes of point processes are investigated. We showed that the preservation of LF ordering of point process with respect to several operations, such as independent marking, thinning, random translation and superposition. We introduced the applications of LF ordering of point processes to wireless networks such as cellular networks and cognitive networks, which provided guidelines for design of user selection schemes and transmission strategy for wireless networks. Tradeoffs between coverage and interference as well as fairness and peakyness were also discussed. The power of this approach is that network performance comparisons can be made even in cases where a closed form expression for the performances is not analytically tractable. We verified our results through Monte Carlo simulations.

\section*{Conflicts of Interest}
The authors declare that there is no conflict of interest regarding the publication of this paper.

\appendix
\section{}
\label{app:Proofs_for_All}

\subsection{Proof of Theorem \ref{Lf_ordering_of_Cox_PPs}}
\label{app:Proof_for_Lf_Cox_PPs}
A Cox process is a Poisson point process conditional on the realization of the intensity measure. Therefore, the Laplace functional of the Cox process $\Phi$ with driving random measure $\Psi$ can be expressed as follows \cite{Stoyan1995}:
\begin{eqnarray}
\label{eqn:app_Lf_Cox_PPs1}
L_{\Phi}(u) &=& \int_{\mathbb{M}}L_{\Lambda}(u)\Psi(\mathrm{d}\Lambda) \\
\label{eqn:app_Lf_Cox_PPs2}
&=& \int_{\mathbb{M}}\exp \left(-\int_{\mathbb{R}^d}[1-\exp(-u(x))]\Lambda(\mathrm{d}x)\right)\Psi(\mathrm{d}\Lambda) \\
\label{eqn:app_Lf_Cox_PPs3}
&=& L_{\Psi}(1-\exp(-u)) \text{,}
\end{eqnarray}
where $L_{\Lambda}$ is the Laplace functional of the Poisson process of intensity measure $\Lambda$ and $\mathbb{M}$ is the set of intensity measures $\Lambda$ in \eqref{eqn:app_Lf_Cox_PPs1}. Equation \eqref{eqn:app_Lf_Cox_PPs2} follows from the definition of Laplace functional of Poisson point process in \eqref{eqn:def_Lf_of_PP} and \eqref{eqn:app_Lf_Cox_PPs3} follows from the definition of Laplace functional of random measure in \eqref{eqn:def_Lf_ordering}. Then, the proof follows from Definition \ref{def_Lf_ordering}.

\subsection{Proof of Theorem \ref{Lf_ordering_of_homo_indep_cluster_PPs}}
\label{app:Proof_for_homo_indep_cluster_PPs}
Let $L_0^{(x)}$ denote the Laplace functional of $\Phi_0+x$, i.e., the Laplace functional of the representative cluster translated by $x$. It can be expressed as follows:
\begin{equation}
\label{eqn:Lf_homo_indep_cluster_PPs1}
L_0^{(x)}(u) := \mathbb{E}_{\Phi_0}\left[\prod_{y\in\Phi_0+x}\exp(-u(y))\right] = \mathbb{E}_{\Phi_0}\left[\prod_{y\in\Phi_0}\exp(-u(y+x))\right] \text{.}
\end{equation}
From \eqref{eqn:Lf_homo_indep_cluster_PPs1}, the Laplace functional of the homogeneous independent cluster process $\Phi$ with parent process $\Phi_{\mathrm{p}}$ can be expressed as follows \cite{Stoyan1995}:
\begin{equation}
\label{eqn:Lf_homo_indep_cluster_PPs2}
L_{\Phi}(u) = \mathbb{E}_{\Phi_{\mathrm{p}}}\left[\prod_{x\in\Phi_{\mathrm{p}}}L_0^{(x)}(u)\right] \text{, } u \in \mathscr{U} \text{.}
\end{equation}
From Definition \ref{def_Lf_ordering} and \eqref{eqn:Lf_homo_indep_cluster_PPs2}, if $\Phi_{\mathrm{p}_1} \leq_{\mathrm{Lf}} \Phi_{\mathrm{p}_2}$ and same $L_0^{(x)}(u)$ in both, then $\Phi_1\leq_{\mathrm{Lf}} \Phi_2$. Similarly, if $\Phi_{0_1} \leq_{\mathrm{Lf}} \Phi_{0_2}$ and same $\Phi_{\mathrm{p}}$ in both, then $\Phi_1\leq_{\mathrm{Lf}} \Phi_2$. Therefore, if $\Phi_{\mathrm{p}_1} \leq_{\mathrm{Lf}} \Phi_{\mathrm{p}_2}$ and $\Phi_{0_1} \leq_{\mathrm{Lf}} \Phi_{0_2}$, then $\Phi_1\leq_{\mathrm{Lf}} \Phi_2$.

\subsection{Proof of Theorem \ref{Lf_ordering_of_pert_lat_PPs}}
\label{app:Proof_for_Lf_pert_lat_PPs}
The uniformly perturbed lattice process can be considered as the superpositions of independent finite point processes corresponding to each Voronoi cell. The finite point process in one of Voronoi cells consists of uniformly distributed random number of points $X$. The Laplace functional of the finite point process can be expressed as follows:
\begin{equation}
\label{eqn:app_Lf_pert_lat1}
L_{\Phi}(u)=\mathbb{E}_{X}\left[\left(\frac{1}{\lambda|V|}\int_{V}\exp(-u(x))\lambda\mathrm{d}x\right)^{X}\right] \text{,}
\end{equation}
where $V$ is a region of the Voronoi cell. By denoting $t = 1/(\lambda|V|)\int_{V}\exp(-u(x))\lambda\mathrm{d}x$, $t^z$ is a c.m. function of $z$ since $1/(\lambda|V|)\int_{V}\exp(-u(x))\lambda\mathrm{d}x \leq 1$. Therefore, if $X_1 \leq_{\mathrm{Lt}} X_2$, then we have the following relation:
\begin{equation}
\label{eqn:app_Lf_pert_lat2}
\mathbb{E}_{X_1}\left[t^{X_1}\right] \geq \mathbb{E}_{X_2}\left[t^{X_2}\right] \text{.}
\end{equation}
From \eqref{eqn:app_Lf_pert_lat2} the following LF ordering is obtained,
\begin{equation}
\label{eqn:app_Lf_pert_lat3}
\mathbb{E}_{X_1}\left[\left(\frac{1}{\lambda|V|}\int_{V}\exp(-u(x))\lambda\mathrm{d}x\right)^{X_1}\right] \geq \mathbb{E}_{X_2}\left[\left(\frac{1}{\lambda|V|}\int_{V}\exp(-u(x))\lambda\mathrm{d}x\right)^{X_2}\right] \text{.}
\end{equation}
Since the finite point processes are LF ordered, their superpositions are also LF ordered from Lemma \ref{Lf_ordering_superposition_PP}. The proof of Theorem \ref{Lf_ordering_of_pert_lat_PPs} is completed.

\subsection{Proof of Theorem \ref{Lf_ordering_of_mixed_BPPs}}
\label{app:Proof_for_Lf_mixed_BPPs}
Similar to Appendix \ref{app:Proof_for_Lf_pert_lat_PPs}, the Laplace functional of a mixed binomial point process with random number of points $N$ can be expressed as follows:
\begin{equation}
\label{eqn:app_Lf_mixed_BPPs1}
L_{\Phi}(u)=\mathbb{E}_{N}\left[\left(\frac{1}{\lambda|B|}\int_{B}\exp(-u(x))\lambda\mathrm{d}x\right)^{N}\right] \text{,}
\end{equation}
where $B$ is the bounded set of the point process. By denoting $t = 1/(\lambda|B|)\int_{B}\exp(-u(x))\lambda\mathrm{d}x$, $t^z$ is a c.m. function of $z$ since $1/(\lambda|B|)\int_{B}\exp(-u(x))\lambda\mathrm{d}x \leq 1$. Therefore, if $N_1 \leq_{\mathrm{Lt}} N_2$, then we have the following relation:
\begin{equation}
\label{eqn:app_Lf_mixed_BPPs2}
\mathbb{E}_{N_1}\left[t^{N_1}\right] \geq \mathbb{E}_{N_2}\left[t^{N_2}\right] \text{.}
\end{equation}
From \eqref{eqn:app_Lf_mixed_BPPs2} the following LF ordering is obtained,
\begin{equation}
\label{eqn:app_Lf_mixed_BPPs3}
\mathbb{E}_{N_1}\left[\left(\frac{1}{\lambda|B|}\int_{B}\exp(-u(x))\lambda\mathrm{d}x\right)^{N_1}\right] \geq \mathbb{E}_{N_2}\left[\left(\frac{1}{\lambda|B|}\int_{B}\exp(-u(x))\lambda\mathrm{d}x\right)^{N_2}\right] \text{.}
\end{equation}
The proof of Theorem \ref{Lf_ordering_of_mixed_BPPs} is completed.

\subsection{Proof of Lemma \ref{Lf_ordering_marked_PP}}
\label{app:Proof_for_marked_PPs}
The Laplace functional of an independently marked point process $\widetilde{\Phi}$ with a non-negative function $u:\mathbb{R}^d\times\mathbb{R}^{\ell}\mapsto\mathbb{R}^{+}$ can be expressed as follows \cite{Daley2003}:
\begin{eqnarray}
\label{eqn:Lf_marked_PPs1}
L_{\widetilde{\Phi}}(u)&=&\mathbb{E}_{\Phi}\left[\prod_{x\in\Phi}\int_{\mathbb{R}^{\ell}}\exp(-u(m,x))F(\mathrm{d}m \vert x)\right] \\
\label{eqn:Lf_marked_PPs2}
&=&\mathbb{E}_{\Phi}\left[\prod_{x\in\Phi}\exp\left(-\left(-\log\int_{\mathbb{R}^{\ell}}\exp\left(-u(m, x)\right)F(\mathrm{d}m \vert x)\right)\right)\right] \\
\label{eqn:Lf_marked_PPs3}
&=&\mathbb{E}_{\Phi}\left[\prod_{x\in\Phi}\exp\left(-\tilde{u}(x)\right)\right]
=\mathbb{E}_{\Phi}\left[\exp\left(-\sum_{x\in\Phi}\tilde{u}(x)\right)\right] \text{.}
\end{eqnarray}
where $\tilde{u}(x)=-\log\int_{\mathbb{R}^{\ell}}\exp\left(-u(m,x)\right)F(\mathrm{d}m \vert x)$. Since $0 \leq \int_{\mathbb{R}^{\ell}}\exp\left(-u(m,x)\right)F(\mathrm{d}m \vert x) \leq 1$, $\tilde{u}(x)$ is a non-negative function of $x$. Then, the Laplace functional of marked point process $\widetilde{\Phi}$ follows from
\begin{equation}
\label{eqn:Lf_marked_PPs4}
L_{\widetilde{\Phi}}(u)=L_{\Phi}(\tilde{u}) \text{.}
\end{equation}
Therefore, from \eqref{eqn:Lf_marked_PPs4} and the definition of LF ordering in \eqref{eqn:def_Lf_ordering}, if $\Phi_1 \leq_{\mathrm{Lf}} \Phi_2$, then $\widetilde{\Phi}_1 \leq_{\mathrm{Lf}} \widetilde{\Phi}_2$.

\subsection{Proof of Lemma \ref{Lf_ordering_of_thinning_PPs}}
\label{app:Proof_for_thinning_PPs}
If $L_{\Phi}$ is the Laplace functional of $\Phi$ then that of $\Phi_{\mathrm{th}}$ is
\begin{equation}
L_{\Phi_{\mathrm{th}}}(u)=L_{\Phi}(u_{\mathrm{p}}) \text{ for } u\in\mathscr{U} \text{,}
\end{equation}
where $u_{\mathrm{p}}(x)=-\log\left(\exp(-u(x))p(x)+1-p(x)\right)$. From Definition \ref{def_Lf_ordering}, if $\Phi_1 \leq_{\mathrm{Lf}} \Phi_2$, then $\Phi_{\mathrm{th},1} \leq_{\mathrm{Lf}} \Phi_{\mathrm{th},2}$. The $p$-thinning is subset of $p(x)$-thinning. Analogous formula for $\pi(x)$-thinning follows by averaging with respect to the distribution of the random process $\boldsymbol{\pi}$. Since the inequality holds under every realization $\pi(x)$, their expectations also hold the inequality.

\subsection{Proof of Lemma \ref{Lf_ordering_translation_PP}}
\label{app:Proof_for_translation_PPs}
Let $F(\cdot)$ denote the common distribution for the translations $t$. For $u\in\mathscr{U}$, the Laplace functional after random translation takes the form
\begin{equation}
L_{\Phi_{\mathrm{rt}}}(u)=L_{\Phi}\left(u_{\mathrm{t}}\right) \text{,}
\end{equation}
where $u_{\mathrm{t}}(x)=-\log\left(\int_{\mathbb{R}^d}\exp(-u(x+t))F(\mathrm{d}t\vert x)\right)$. From Definition \ref{def_Lf_ordering}, if $\Phi_1 \leq_{\mathrm{Lf}} \Phi_2$, then $\Phi_{\mathrm{rt},1} \leq_{\mathrm{Lf}} \Phi_{\mathrm{rt},2}$.

\subsection{Proof of Lemma \ref{Lf_ordering_superposition_PP}}
\label{app:Proof_for_superposition_PP}
Let $\Phi_{1,i}$ and $\Phi_{2,i}, i=1,...,M$ be mutually independent point processes and $\Phi_1=\bigcup_{i=1}^M \Phi_{1,i}$ and $\Phi_2=\bigcup_{i=1}^M \Phi_{2,i}$ be the superposition of point processes. The Laplace function of superposition of mutually independent point processes can be expressed as follows:
\begin{equation}
\label{eqn:Lf_sum_pp}
L_{\Phi}(u)=\prod_{i=1}^{M}L_{\Phi_i}(u) \text{.}
\end{equation}
$L_{\Phi}(u)$ converges if and only if the infinite sum of point processes is finite on bounded area $B\in\mathbb{R}^{d}$. Therefore, from \eqref{eqn:Lf_sum_pp} and the definition of LF ordering in \eqref{eqn:def_Lf_ordering}, if $\Phi_{1,i} \leq_{\mathrm{Lf}} \Phi_{2,i}$ for $i=1,...,M$, then $\Phi_1 \leq_{\mathrm{Lf}} \Phi_2$.

\subsection{Proof of Theorem \ref{Tot_Cell_Cov_Prob}}
\label{app:Proof_for_Tot_Cell_Cov_Prob}
To prove Theorem \ref{Tot_Cell_Cov_Prob}, we first express for a generic $\Phi_{_{\mathrm{M}}}$,
\begin{eqnarray}
\label{eqn:tot_cov_prob1}
\lefteqn{P\left(\emph{\text{SINR}}(x_{1}) \geq T,\dots,\emph{\text{SINR}}(x_{N}) \geq T\right)} \\
\label{eqn:tot_cov_prob2}
&=& \mathbb{E}_{\Phi_{_{\mathrm{M}}}}\left[P\left(\emph{\text{SINR}}(x_{1}) \geq T,\dots,\emph{\text{SINR}}(x_{N}) \geq T\right) \Big\vert \Phi_{_{\mathrm{M}}}\right] \\
\label{eqn:tot_cov_prob3}
&=& \mathbb{E}_{\Phi_{_{\mathrm{M}}}}\left[\prod_{x \in C_{0}\cap\Phi_{_{\mathrm{M}}}} P\left(\emph{\text{SINR}}(x) \geq T\right) \Big\vert \Phi_{_{\mathrm{M}}}\right] \\
\label{eqn:tot_cov_prob4}
&=& \mathbb{E}_{\Phi_{_{\mathrm{M}}}}\left[\prod_{x \in C_{0}\cap\Phi_{_{\mathrm{M}}}} P\left(h_{\mathrm{S}}^{(x)} \geq \frac{T(\sigma^2+I(x))}{g(\Vert x \Vert)}\right) \Big\vert \Phi_{_{\mathrm{M}}}\right] \\
\label{eqn:tot_cov_prob5}
&=& \mathbb{E}_{\Phi_{_{\mathrm{M}}}}\left[\exp\left(-\sum_{x \in \Phi_{_{\mathrm{M}}}}\frac{T(\sigma^2+I(x))}{g(\Vert x \Vert)}\mathbf{1}\{x\in C_{0}\}\right) \Big\vert \Phi_{_{\mathrm{M}}}\right] \text{.}
\end{eqnarray}
Equation \eqref{eqn:tot_cov_prob3} follows from the assumption that $h_{\mathrm{S}}$ and $h_{\mathrm{I}}$ in \eqref{eqn:def_SINR} and \eqref{eqn:Interference_model} are independent under the given realizations of $\Phi_{_{\mathrm{M}}}$. From the assumption that $h_{\mathrm{S}}^{(x)}$ is exponential distributed (Rayleigh fading), \eqref{eqn:tot_cov_prob5} follows. In \eqref{eqn:tot_cov_prob5}, $u(x)=(T(\sigma^2+I(x))/g(\Vert x \Vert))\mathbf{1}\{x\in C_{0}\}$ is a non-negative function of $x$. Therefore, if $\Phi_{_{\mathrm{M}_1}} \leq_{\mathrm{Lf}} \Phi_{_{\mathrm{M}_2}}$, then \eqref{eqn:Tot_Cell_Cov_Prob} follows by \eqref{eqn:def_Lt_ordering_conseq1} because $\exp(-z)$ is a c.m. function.

\subsection{Proof of $\mathbb{E}[I_{\mathrm{SU}}]$}
\label{app:Proof_for_Cognitive_networks}
From Campbell's theorem in \eqref{eqn:Campbell_Theorem_PP} and the intensity measure of the mixed binomial point processes $\Lambda(B)=\lambda|B|$, the average of interference can be expresses as follows:
\begin{eqnarray}
\label{eqn:app_Cogn_net1}
\mathbb{E}[I_{_{\mathrm{SU}}}] &=&\mathbb{E}_{\Phi_{_\mathrm{SU,th}},h_{\mathrm{I}}}\left[\sum_{x\in\Phi_{_\mathrm{SU,th}}} h_{\mathrm{I}}^{(x)}g(\Vert x \Vert) \Big\vert \Phi_{_\mathrm{SU,th}},h_{\mathrm{I}}\right] \\
\label{eqn:app_Cogn_net2}
&=&\mathbb{E}_{h_{\mathrm{I}}}\left[\underbrace{\lambda\int_{B}h_{\mathrm{I}}^{(x)}g(\Vert x \Vert)\mathrm{d}x}_{A} \Big\vert h_{\mathrm{I}}\right] \text{.}
\end{eqnarray}
Conditioned on $h_{\mathrm{I}}$, $A$ is always equal. Therefore, the expectation of \eqref{eqn:app_Cogn_net2} with respect to $h_{\mathrm{I}}$ is still equal. The proof is completed. 

\bibliographystyle{IEEEtran}
\nocite{*}
\bibliography{references}

\end{document}